\newcommand{\R}{\mathbb{R}}
\newcommand{\C}{\mathbb{C}}
\newcommand{\nn}{\nonumber}
\newcommand{\mc}[1]{\mathcal{#1}}
\newcommand{\pd}[2]{\frac{\partial #1}{\partial #2}}
\newcommand{\tree}{\mathrm{tree}}
\newcommand{\avg}[1]{\langle#1\rangle}
\newcommand{\aAvg}[3]{\langle#1|#2|#3]}
\newcommand{\expval}[3]{\langle #1|\hspace*{.2mm}#2\hspace*{.3mm}|#3 \rangle}
\newcommand{\Res}{\mathop{\rm Res}}
\newcommand{\IP}{\mc I^{\mathrm{P}}_{10}}
\newcommand{\cut}{\operatorname{cut}}
\newcommand{\bc}{\begin{center}}
\newcommand{\ec}{\end{center}}
\newcommand{\LRa}{\Longrightarrow}
\newtheorem{thm}{Theorem}
\newtheorem{lemma}{Lemma}
\newtheorem{example}{Example}
\title{Multivariate Residues and Maximal Unitarity}
\author{Mads S{\o}gaard, Yang Zhang}
\affiliation{
Niels Bohr International Academy and Discovery Center, Niels Bohr Institute, \\
University of Copenhagen, Blegdamsvej 17, DK-2100 Copenhagen, Denmark
}
\emailAdd{madss@nbi.dk}
\emailAdd{zhang@nbi.dk}
\abstract{We extend the maximal unitarity method to amplitude contributions
whose cuts define multidimensional algebraic varieties. The technique is valid
to all orders and is explicitly demonstrated at three loops in gauge theories
with any number of fermions and scalars in the adjoint representation.
Deca-cuts realized by replacement of real slice integration contours by
higher-dimensional tori encircling the global poles are used to factorize the
planar triple box onto a product of trees. We apply computational algebraic
geometry and multivariate complex analysis to derive unique projectors for all
master integral coefficients and obtain compact analytic formulae in terms of
tree-level data.}
\begin{document}
\maketitle
\flushbottom
\clearpage

\section{Introduction}
Potential discovery of new physics depends on our ability to compute precision
cross section predictions for scattering of subatomic particles and in
particular, a quantitative understanding of all relevant Standard Model
processes which necessarily must be separated from the experimental data.
Theoretical calculations carried out at the LHC start with tree-level
amplitudes at leading order (LO) in perturbative Quantum Chromodynamics (QCD),
whereas a combination of one-loop amplitudes and higher-multiplicity trees
provide next-to-leading order corrections of quantitative reliability.
Although computations of this type are very complicated, recent years have
seen major advances at NLO, especially for processes with many final states.
In the near future, theoretical calculations offered at NLO become
insufficient to saturate the bound for experimental uncertainty. The upcoming
frontier is therefore formed by NNLO computations and in particular,  two-loop
amplitudes, which are also relevant already at NLO for specific processes such
as production of electroweak gauge bosons by gluon fusion, for example. 

Historically, contributions to perturbative scattering amplitudes have been
tracked pictorially by means of Feynman diagrams, which lead to precise
mathematical expressions using the Feynman rules. Although this method gives
an indispensable fundamental intuition for interactions between elementary
particles, even simple problems beyond two-by-two gluon scattering beyond the
tree-level become cumbersome due to the presence of a large redundancy in the
theory needed in order to compensate for virtual intermediate states. In the
last decade, strikingly powerful on-shell methods for amplitude computations
at tree- and loop-level involving only physical information and analytic
properties have emerged, the most prominent examples being the
Britto-Cachazo-Feng-Witten (BCFW) \cite{Britto:2004ap,Britto:2005fq} recursion
relations and the generalized unitarity method, an enhanced version of the
original unitarity method due to Bern, Dixon, and Kosower
\cite{Bern:1994cg,Bern:1994zx}. Their importance is reflected by the fact that
all gauge theory and also gravity trees now may be constructed by  
just the Cauchy residue theorem and complex kinematics in three-point
amplitudes, which then are recycled for loops. 

The unitarity method (see also
\cite{Bern:1995db,Bern:1997sc,Britto:2004nc,Britto:2004nj,
Bern:2005hh,Bidder:2005ri,Britto:2005ha,Britto:2006sj,Mastrolia:2006ki,
Brandhuber:2005jw,Ossola:2006us,Anastasiou:2006gt,Bern:2007dw,Forde:2007mi,
Badger:2008cm,Giele:2008ve,Britto:2006fc,Britto:2007tt,Bern:2010qa,
Anastasiou:2006jv} for extensive subsequent studies) exploits that the
discontinuity of the transition matrix can be expressed in terms of simpler
quantities to probe the analytic structure of the loop integrand, thereby
reconstructing amplitudes from two-particle cuts that impose on-shell
constraints on internal lines. This requires an ansatz and typically algebra
at intermediate steps because many contributions contaminate the cuts. In
generalized unitarity, several propagators are placed on their mass-shell
simultaneously in order to isolate a small subset of integrals in a basis. At
one-loop, generalized unitarity has led to remarkably compact formulae for all
box, triangle and bubble integrals \cite{Britto:2004nc,Forde:2007mi} which are
now fully automated in several numerical implementations
\cite{Ellis:2007br,Berger:2008sj,Ossola:2007ax,
Mastrolia:2008jb,Giele:2008bc,Berger:2009zg,Badger:2010nx,Berger:2010zx,
Hirschi:2011pa} that are being applied to phenomenology at the LHC. 

Otherwise unattainable processes in massless QCD have been computed to at the
two-loop order with four external particles using the unitarity method and
other techniques including integration-by-parts identities
\cite{Bern:1997nh,Bern:2000dn,Glover:2001af,Bern:2002tk,Anastasiou:2000kg,
Anastasiou:2000ue,Anastasiou:2001sv}. In the last couple of years, new
promising methods for two-loop amplitudes such as integrand-level reduction by
multivariate polynomial division algorithms using Gr\"{o}bner bases and 
classification of on-shell solutions by primary decomposition based on
computational algebraic geometry have been demonstrated
\cite{Badger:2012dp,Mastrolia:2011pr,Zhang:2012ce,Feng:2012bm,
Mastrolia:2012an,Mastrolia:2012wf,Mastrolia:2012du,Kleiss:2012yv,
Huang:2013kh,Badger:2013gxa,Badger:2012dv}. In particular, very
recently the first results for five-gluon scattering at two loops in QCD were
obtained by Badger, Frellesvig and one of the present authors along these
lines \cite{Badger:2013gxa}. The same technique has also been applied to
maximal cuts of three-loop amplitudes \cite{Badger:2012dv}. 

In this paper, we will pursue amplitude computation at the level of integrated
expressions in the framework of maximal unitarity, proposed by Kosower and
Larsen \cite{Kosower:2011ty}. Being an intense version of the generalized
unitarity method, maximal unitarity is the natural continuation of the direct
extraction procedures for one-loop integrals. Maximal unitarity relies on a
unitarity comptatible basis of linearly independent integrals and the fact
that basis elements necessarily involve nontrivial tensorial numerators that
contaminate the cuts is a difficulty. However, steps towards developing a
general basis of planar two-loop integrals were recently taken in
\cite{Gluza:2010ws,Schabinger:2011dz}. The idea of maximal unitarity is to
isolate individual integrals in a basis by localizing their integrands onto
global poles and thereby cut as many propagators as possible. The loop
amplitude in question falls apart and becomes a product of its constituent
trees which upon integration along complex contours yield very compact final
expressions whose simplicity is diametrically opposite to that suggested by
the Feynman diagram approach.  Until now, integral bases for four-particle
amplitude contributions have been constructed by reduction to a set of master
integrals via integration-by-parts identities. Maximal unitarity has been
successfully applied in hepta-cuts of planar and non-planar double box
amplitude contributions with and without massive external legs in generic
gauge theories
\cite{Kosower:2011ty,CaronHuot:2012ab,Johansson:2012zv,Johansson:2012sf,
Larsen:2012sx,Sogaard:2013yga,Johansson:2013sda}. Previously octa-cuts and
hepta-cuts of two-loop amplitudes were addressed in maximally supersymmetric
$\mc N = 4$ super Yang-Mils theory \cite{Buchbinder:2005wp,Cachazo:2008vp}. 

The current status of maximal unitarity is that it has been demonstrated to
work for univariate residues. Indeed, hepta-cuts of two-loop amplitudes
generate algebraic curves \cite{CaronHuot:2012ab,Huang:2013kh} and the
unfrozen degree of freedom on the maximal cut is reflected by a left-over
one-dimensional contour integral that computes residues. It is common that
amplitudes at higher loops or below the leading singularity at the two-loop
level require several complex variables to parametrize a solution of a
unitarity cut. This necessarily leads to residues of higher-dimensional
differential forms. It is of obvious theoretical and practical interest to
extend the univariate unitarity method to general amplitude contributions. The
aim of this paper is therefore to develop a systematic way of determining
master integral coefficients in an integral basis using multivariate residues
in generalized unitarity cuts that define multidimensional algebraic
varieties.

The paper is organized in three main parts in the following way. In section 2
we introduce the mathematical prerequisites of multivariate residues and
computational algebraic geometry. Then, in section 3 we apply the formalism to
the planar triple box and derive a general formula for the master integral
coefficients.  Finally, in section 4 we obtain explicit results for the
three-loop triple box in any renormalizable gauge theory for independent
helicity configurations of four external gluons. We choose this primitive
amplitude because it was calculated short time ago, yet it is a highly
nontrivial test of our method. 

\section{Multivariate Residues}
Motivated by the discussion in the introduction we briefly review residues for
several complex variables. We focus on the computation of degenerate
multivariate residues and the global residue theorem. For mathematical
concepts, the references are the textbooks by Hartshorne \cite{MR0463157}, and
by Griffiths and Harris \cite{MR507725}.

We are working with $n$ complex variables, namely, $z\equiv (z_1, \dots,
z_n)$. First we study the local properties of a residue. Consider a residue at
$(z_1,\dots, z_n)=(\xi_1, \dots, \xi_n)\equiv\xi$. Let $U$ be the ball
$||z-\xi||<\epsilon$ and
assume that the functions $f_1(z), \dots, f_n(z)$ are holomorphic in a
neighborhood of the closure $\bar U$ of $U$, and have only one isolated
common zero, $\xi$ in $U$. Let $h(z)$ be a holomorphic function in
a neighborhood of $\bar U$. Then for the differential form,
\begin{equation}
  \label{eq:1}
  \omega=\frac{h(z) dz_1 \wedge \cdots \wedge dz_n}{f_1(z) \cdots f_n(z)}\;,
\end{equation}
the residue at $\xi$ regarding the function list $\{f_1,
  \dots, f_n\}$ is defined to be,
  \begin{equation}
    \label{local_residue}
    \Res{}_{\{f_1,
  \dots, f_n\},\xi}(\omega)=\bigg(\frac{1}{2\pi i}\bigg)^n\oint_{\Gamma}
\frac{h(z) dz_1 \wedge \cdots \wedge dz_n}{f_1(z) \cdots f_n(z)}\;,
  \end{equation}
where the contour $\Gamma$ is defined by the real $n$-cycle
$\Gamma=\{z: |f_i(z)|=\epsilon_i\}$ with the orientation specified by
the differential form $d(\arg f_1) \wedge \cdots \wedge d(\arg f_n)$.

The value of a local residue does not only depend on the location, but
also the ordering in the function list $\{f_1,
  \dots, f_n\}$. However, to simplify the notation, we may rewrite the
  left hand side of eqn. (\ref{local_residue}) as
  $\Res{}_{\xi}(\omega)$. The value of a residue is invariant under
  a non-singular change of complex coordinates. And we can rescale the
  function list, $f_i(z) \to \alpha_i(z) f_i(z)$, where $\alpha_i(z)$'s are
  holomorphic functions and $\alpha_i(\xi)\not=0$. Although the
  contour is changed, $\Res{}_{\{f_1,
  \dots, f_n\},\xi}(\omega)= \Res{}_{\{\alpha_1 f_1,
  \dots, \alpha_n f_n\},\xi}(\omega)$, from the fact that $\omega$ is closed and Stokes'
  theorem.

It is easy to prove that if locally $h(z)$ is generated by $\{f_1,
  \dots, f_n\}$, i.e., 
  \begin{equation}
  h(z)= a_1(z) f_1(z)+\cdots+a_n(z) f_n(z)\;,
  \end{equation}
  where the $a_i(z)$'s are holomorphic functions in a neighborhood of
  $\xi$, then 
  \begin{equation}
    \label{vanishing_th}
    \Res{}_{\{f_1,
  \dots, f_n\},\xi}(\omega)=0\;.
  \end{equation}
The fact comes from Stokes' theorem \cite{MR507725}.

We define the local residue to be {\it non-degenerate}, if the Jacobian of
$\{f_1,
  \dots, f_n\}$ at $\xi$ is nonzero,
  \begin{equation}
    \label{eq:2}
    J(\xi)\equiv \det\bigg(\frac{\partial f_i}{\partial z_j}(\xi)\bigg)\not=0\;.
  \end{equation}
In this case, the value of residue is simply \cite{MR507725},
\begin{equation}
    \label{non_degenerate_residue}
    \Res{}_{\{f_1,
  \dots, f_n\},\xi}(\omega)=\frac{h(\xi)}{J(\xi)}\;.
  \end{equation}
There is another type of residues which can be calculated
straightforwardly, {\it factorizable residues}. The definition is that  
each $f_i$ is a univariate polynomial, namely,
$f_i(z)=f_i(z_i)$. In this case, the contour in
eqn. (\ref{local_residue}) is factorized to the product of univariate
contours,
\begin{equation}
  \label{factorizable_residue}
   \Res{}_{\{f_1,
  \dots, f_n\},\xi}(\omega)=\bigg(\frac{1}{2\pi i}\bigg)^n \oint_{|f_1(z_1)|=\epsilon_1}
\frac{dz_1}{f_1(z_1)} \cdots
\oint_{|f_n(z_n)|=\epsilon_n} \frac{dz_n}{f_n(z_n)} h(z)\;.
\end{equation}
Then, by using the univariate residue formula $n$ times, we get the
value of the residue.

However, in general, a residue is neither non-degenerate nor
factorizable. Usually it is not convenient to use the definition
(\ref{local_residue}) to calculate a residue directly. Hence we need
new techniques for such computation.

\subsection{Calculation of (degenerate) local residues}
We restrict our computation to cases in which $f_1,\dots,f_n$
are polynomials in $(z_1,\dots,z_n)$. This is sufficient for all
unitarity calculations in quantum field theory. When all $f_i$'s are
polynomials, we can use the powerful tool of computational algebraic
geometry to obtain the residue. 

The key is to transform the polynomial list $\{f_1,\dots, f_n\}$ to a
new polynomial list $\{g_1, \dots, g_n\}$ such that the new residue
becomes factorable. So we recall the {\it transformation law},
\begin{thm}[Transformation law] Let $I=\langle f_1, \dots, f_n \rangle$ be the
 zero-dimensional ideal generated by $\{f_1,\dots,f_n\}$ and $J=\langle g_1,
\dots,
 g_n \rangle$ be a zero-dimensional ideal such that $J\subset I$. So
 $g_i = a_{ij} f_j$, where the $a_{ij}$'s are polynomials. Let $A$ be the
 matrix of $a_{ij}$'s, then for
 residues at $\xi$,
 \begin{equation}
   \label{transformation_law}
   \Res{}_{\{f_1,\dots,f_n\}, \xi}\bigg(\frac{h(z) dz_1 \wedge \cdots
     \wedge dz_n}{f_1(z) \cdots f_n(z)}\bigg) =  \Res{}_{\{g_1,\dots,
     g_n\}, \xi}   \bigg(\frac{h(z) dz_1 \wedge \cdots
     \wedge dz_n}{g_1(z) \cdots g_n(z)} \det A \bigg)\;.
 \end{equation}
\end{thm}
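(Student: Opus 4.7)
The plan is to leverage the two fundamental properties of local residues established earlier in this section—the rescaling invariance and the vanishing theorem (\ref{vanishing_th})—by first handling the non-degenerate regime through a direct calculation and then extending to the general (possibly degenerate) case by a perturbation argument.

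First I would treat the non-degenerate case, assuming temporarily that the Jacobian $J_f(\xi) = \det(\partial f_i/\partial z_k)(\xi)$ is nonzero and that $\det A(\xi)\neq 0$. Because $f_j(\xi)=0$, the chain rule applied to $g_i=\sum_j a_{ij}f_j$ yields
\[
J_g(\xi) \;=\; \det\!\left(\sum_j a_{ij}(\xi)\,\frac{\partial f_j}{\partial z_k}(\xi)\right) \;=\; \det A(\xi)\,J_f(\xi),
\]
so $\{g_1,\ldots,g_n\}$ is itself a non-degenerate system at $\xi$. The closed-form expression (\ref{non_degenerate_residue}) applied to both sides of (\ref{transformation_law}) then gives
\[
\Res{}_{\{g_1,\ldots,g_n\},\xi}\!\left(\frac{h\det A\; dz_1\wedge\cdots\wedge dz_n}{g_1\cdots g_n}\right) \;=\; \frac{h(\xi)\det A(\xi)}{J_g(\xi)} \;=\; \frac{h(\xi)}{J_f(\xi)} \;=\; \Res{}_{\{f_1,\ldots,f_n\},\xi}\!\left(\frac{h\;dz_1\wedge\cdots\wedge dz_n}{f_1\cdots f_n}\right),
\]
which is the desired identity.

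To handle the general degenerate case, I would introduce generic linear deformations $f_i^{(t)} = f_i + t\,\ell_i$ together with small perturbations of the entries of $A$ so that, for all $0<|t|<\delta$, both $\{f_i^{(t)}\}$ and $\{g_i^{(t)}\}=A^{(t)}f^{(t)}$ have only simple common zeros in a small neighborhood $U$ of $\xi$, with $\det A^{(t)}\neq 0$ there. Choosing $\Gamma$ to enclose the entire cluster of nearby simple zeros simultaneously, the residue at $\xi$ equals the sum of residues over the cluster, the transformation law holds at each of them by the previous step, and taking $t\to 0$ recovers (\ref{transformation_law}) in full generality. The main obstacle is the continuity step: one must verify that the perturbation of $A$ can be chosen compatibly (so that $\det A^{(t)}\neq 0$ while keeping $g_i^{(t)}\to g_i$) and that $\Gamma$ remains a valid representative of the residue cycle throughout the homotopy. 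The rescaling property helps here by allowing $\Gamma$ to be replaced by any homologous cycle in $U\setminus V(f_1\cdots f_n)$ that is manifestly stable under small deformations of the $f_i$. A purely algebraic alternative, which avoids perturbation altogether, is to invoke Cramer's rule $\det A\cdot f_i=\sum_j b_{ij}g_j$ (with $b_{ij}$ the cofactors of $A$) to express the difference of the two integrands, after clearing common denominators, with a numerator lying in the ideal $J=\langle g_1,\ldots,g_n\rangle$, and then apply the vanishing theorem (\ref{vanishing_th})—the price being careful bookkeeping of the sign cancellations in the resulting combinatorial expansion.
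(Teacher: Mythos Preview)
The paper does not actually prove this theorem; immediately after the statement it writes ``For the proof of this theorem, we refer to \cite{MR507725}'' (Griffiths--Harris). So there is no in-paper argument for you to be compared against.

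On the substance of your sketch: the non-degenerate step is correct and clean---the chain rule together with $f_j(\xi)=0$ gives $J_g(\xi)=\det A(\xi)\,J_f(\xi)$, and the explicit formula (\ref{non_degenerate_residue}) finishes that case. For the degenerate case, the perturbation strategy is a legitimate line of attack, and you rightly flag its delicate points: keeping the cycle $\Gamma$ in the correct homology class throughout the deformation and establishing continuity of the integral as $t\to 0$ are genuine analytic obligations, not formalities. Your Cramer's-rule alternative is closer in spirit to how the transformation law is handled in the cited reference, which proceeds algebraically via Stokes' theorem and the vanishing result (\ref{vanishing_th}) rather than by a limiting argument. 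As written, your degenerate case is an honest sketch with acknowledged gaps rather than a complete proof, but the overall plan is sound.
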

For the proof of this theorem, we refer to \cite{MR507725}. This theorem holds for both non-degenerate and degenerate residues.  

In practice, we use {\it Gr\"obner basis} method to find a list of
$\{g_1,\dots,g_n\}$ such that $g_i(z)=g_i(z_i)$. To obtain $g_i$, we
calculate the Gr\"obner basis of  $I=\langle f_1, \dots, f_n \rangle$
in {\it Lexicographic order} with the ordering $z_{i+1} \succ z_{i+2}
\succ \cdots z_n \succ z_1 \succ z_2 \succ \cdots \succ z_i$. Since this
lexicographic order would eliminate variables $z_1, \dots
,z_{i-1},z_{i+1}, \dots, z_n$, there would be a polynomial in the
basis which only depends on $z_i$. Then we call this polynomial
$g_i$. Repeating this process $n$ times, the desired $\{g_1,
\dots, g_n\}$ is obtained. After applying the
transformation law, the residue becomes factorable so it can
be calculated.

We demonstrate this algorithm by several examples,

\begin{example}
Let $n=2$, $\{f_1,f_2\}=\{z_1,(z_1+z_2)(z_1-z_2)\}$ and $h=z_2$. There
is a degenerate residue at $(0,0)$. Note that three factors $z_1$,
$z_1+z_2$ and $z_1-z_2$ vanish at the residue so the Jacobian is zero.
\begin{enumerate}
\item Calculate the Gr\"obner basis $G_1$, for $I=\langle z_1, (z_1 - z_2)
  (z_1 + z_2)\rangle$ in the lexicographic order $z_2 \succ z_1$. The result
  is that $G_1=\{z_1, z_2^2\}$. So we pick up the polynomial $g_1=z_1$. 
\item Calculate the Gr\"obner basis $G_2$, for $I$ in the
  lexicographic order $z_1 \succ z_2$, then $G_2=\{z_2^2, z_1\}$. Then
  we pick up  the polynomial $g_2=z_2^2$.
\item From the calculation of Gr\"obner basis,
  \begin{equation}
    \label{eq:5}
  \begin{pmatrix}
        g_1 \\
       g_2
 \end{pmatrix}=
  \begin{pmatrix}
        1 & 0\\
        z_1 & -1
 \end{pmatrix}  \begin{pmatrix}
        f_1 \\
       f_2
 \end{pmatrix}\;.
  \end{equation}
So $\det A=-1$. 
\end{enumerate}
The we calculate the residue by the transformation law,
\begin{equation}
  \label{eq:6}
  \Res{}_{\{f_1,f_2\}, (0,0)}\bigg(\frac{z_2 dz_1 \wedge
    dz_2}{z_1(z_1+z_2)(z_1-z_2)}\bigg) = \Res{}_{\{g_1, g_2\},
    (0,0)}\bigg(-\frac{z_2 dz_1 \wedge dz_2}{z_1 z_2^2}\bigg)=-1\;.
\end{equation}
Note that for this simple example, actually $G_1=G_2$. However, in general, we
need to calculate  
Gr\"obner basis $n$ times.
\end{example}

\begin{example}
Let $n=2$, $\{f_1,f_2\}=\{z_1^2,z_2-z_1\}$ and $h=z_2$. There
is a degenerate residue at $(0,0)$. Although there are only two
factors $z_1$, $z_2-z_1$ vanishing at the residue, the factor $z_1$
has the power $2$ so again the Jacobian is zero.

By Gr\"obner basis calculation,
 \begin{equation}
    \label{eq:5}
  \begin{pmatrix}
        z_1^2 \\
       z_2^2
 \end{pmatrix}=
  \begin{pmatrix}
        1 & 0\\
        1 & z_1+z_2
 \end{pmatrix}  \begin{pmatrix}
        f_1 \\
       f_2
 \end{pmatrix}\;.
  \end{equation}
\end{example}
Then, $\det(A)=z_1+z_2$, 
\begin{equation}
  \label{eq:6}
  \Res{}_{\{f_1,f_2\}, (0,0)}\bigg(\frac{z_2 dz_1 \wedge
    dz_2}{z_1^2 (z_2-z_1)}\bigg) = \Res{}_{\{g_1, g_2\},
    (0,0)}\bigg(\frac{z_2(z_1+z_2) dz_1 \wedge dz_2}{z_1^2 z_2^2}\bigg)=1\;.
\end{equation}

For our residue calculation of the three-loop triple box, we
frequently deal with the degenerate residues demonstrated in Example 1 and
2, as (1) more than $2$ factors vanish at the same point (2) one of the
vanishing factor has the power larger than $1$. These degenerate
residues are all calculated by Gr\"obner basis method. The computation
is automated by a program\footnote{We need the generator matrix of the
  Gr\"obner basis, which is not directly provided by Mathematica.} powered by the algebraic geometry software
Macaulay2 \cite{M2}.

Furthermore, when $n=2$, if three different factors $f_1$, $f_2$ and $f_3$
vanish at the
same point $\xi$, we have three different residues at $\xi$, namely,
\begin{equation}
  \Res{}_{\{f_1 f_2, f_3\},\xi}(\omega)\;,\quad 
  \Res{}_{\{f_2 f_3, f_1\},\xi}(\omega)\;,\quad 
  \Res{}_{\{f_3 f_1, f_2\},\xi}(\omega)\;.
\end{equation}
In general, the values of three residues are not the same. However, it
is clear that for a large number of classes of $f_i$'s, the sum of
three residues is zero. 

\begin{lemma}
  For three linear functions, $f_1(z_1,z_2)$, $f_2(z_1,z_2)$ and $f_3(z_1,z_2)$ vanishing
at
  $\xi$, such that $\langle f_1, f_2\rangle$, $\langle f_2, f_3\rangle$
  and $\langle f_3, f_1\rangle$ are all zero-dimensional ideals, let
  $h(z_1, z_2)$ be a holomorphic function in a neighborhood of $\xi$,
  and $\omega=h dz_1 \wedge dz_2/(f_1 f_2 f_3)$. Then,
\begin{equation}
  \Res{}_{\{f_1 f_2, f_3\}, \xi}(\omega)+ \Res{}_{\{f_2 f_3, f_1\},
    \xi} (\omega) + \Res{}_{\{f_3 f_1, f_2\}, \xi} (\omega) =0\;.
\label{triple_lemma}
\end{equation}
\end{lemma}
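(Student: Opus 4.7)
The plan is to reduce to a canonical form using the invariances noted just below \eqref{local_residue}, then evaluate each of the three residues directly by the transformation law of Theorem 1, in each case with a factorizable target so that \eqref{factorizable_residue} applies.

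The reduction proceeds as follows. Translate $\xi$ to the origin. Because $\langle f_1,f_2\rangle$ is zero-dimensional and $f_1,f_2$ are linear forms vanishing at the origin, they are linearly independent, so a linear change of coordinates brings $(f_1,f_2)$ to $(z_1,z_2)$. Writing the third form as $f_3 = a z_1 + b z_2$, the zero-dimensionality of the other two pairwise ideals forces $a,b\neq 0$; a further diagonal rescaling of coordinates combined with constant rescalings of the $f_i$'s then puts the triple into the normal form $(f_1,f_2,f_3)=(z_1,z_2,z_1+z_2)$. All three residues $R_1,R_2,R_3$ transform by a common nonzero factor under these operations, so the identity \eqref{triple_lemma} is preserved and it suffices to establish it in the normal form for arbitrary holomorphic $h$.

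In the normal form, I would apply Theorem 1 three times, choosing in each case a target $\{g_1(z_1),g_2(z_2)\}$ of univariate polynomials lying in the relevant ideal. For $R_1=\Res_{\{z_1 z_2,\, z_1+z_2\},0}(\omega)$ the identities $z_i^2 = z_i(z_1+z_2)-z_1 z_2$ for $i=1,2$ exhibit $\{z_1^2,z_2^2\}$ in the ideal with transformation matrix of determinant $z_1-z_2$; \eqref{factorizable_residue} then evaluates $R_1$ to $\partial_{z_2}h(0)-\partial_{z_1}h(0)$. Parallel one-line reductions produce the targets $\{z_1,z_2^2\}$ and $\{z_1^2,z_2\}$ with determinants $-1$ and $+1$, yielding $R_2=-\partial_{z_2}h(0)$ and $R_3=+\partial_{z_1}h(0)$; the three expressions cancel. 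The main delicate step is tracking that the reduction to normal form acts uniformly on all three residues, and the rest is sign bookkeeping in the transformation matrices; the key algebraic fact making this mechanical is that the squares of both coordinates lie in each of the three ideals in the normal form.
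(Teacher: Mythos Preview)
Your proof is correct and follows essentially the same strategy as the paper: translate to the origin and use the transformation law (Theorem~1) with factorizable targets to evaluate each residue, then observe the cancellation. The differences are tactical rather than conceptual. You first normalize the triple to $(z_1,z_2,z_1+z_2)$ and then use three different factorizable targets $\{z_1^2,z_2^2\}$, $\{z_1,z_2^2\}$, $\{z_1^2,z_2\}$, reading off each residue as a derivative of $h$ and seeing the three values cancel. The paper skips the normalization, keeps the generic linear forms $f_i=a_i z_1+b_i z_2$, and transforms all three residues to the \emph{same} target $\{z_1^2,z_2^2\}$; the point is then that the three transformation determinants $\det A_i$ themselves sum to zero (by a $2\times 2$ Laplace identity), so the cancellation occurs before $h$ is ever touched. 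The paper's version is slightly slicker in that it never needs to expand $h$, while your normalization step makes the individual matrix computations lighter; either way the content is the same.
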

\begin{proof}
 We prove this identity by direct computation. Without loss of
 generality, we set $\xi=(0,0)$, $f_1=a_1 z_1+b_1 z_2$, $f_2=a_2
 z_1+b_2 z_2$ and $f_3=a_3 z_1+b_3 z_2$. By the Gr\"obner basis
 method, we have,
\begin{equation}
    \label{eq:5}
  \begin{pmatrix}
        z_1^2 \\
       z_2^2
 \end{pmatrix}=
  A_1\begin{pmatrix}
        f_1 f_2 \\
       f_3
 \end{pmatrix}\;, \quad 
\begin{pmatrix}
        z_1^2 \\
       z_2^2
 \end{pmatrix}=
  A_2\begin{pmatrix}
        f_2 f_3 \\
       f_1
 \end{pmatrix}\;, \quad 
\begin{pmatrix}
        z_1^2 \\
       z_2^2
 \end{pmatrix}=
  A_3\begin{pmatrix}
        f_3 f_1 \\
       f_2
 \end{pmatrix}\;,
  \end{equation}
where 
\begin{equation}
  \label{eq:10}
  \det A_1=\frac{\begin{vmatrix}
z_2 & z_1  \\     
 a_3 & b_3        
 \end{vmatrix}}{\begin{vmatrix}
        a_3 & b_3 \\
        a_1 & b_1
 \end{vmatrix}
\begin{vmatrix}
        a_3 & b_3 \\
        a_2 & b_2
 \end{vmatrix}
}\;,\quad
\det A_2=\frac{\begin{vmatrix}
z_2 & z_1  \\     
 a_1 & b_1        
 \end{vmatrix}}{\begin{vmatrix}
        a_1 & b_1 \\
        a_2 & b_2
 \end{vmatrix}
\begin{vmatrix}
        a_1 & b_1 \\
        a_3 & b_3
 \end{vmatrix}
}\;,\quad
\det A_3=\frac{\begin{vmatrix}
z_2 & z_1  \\     
 a_2 & b_2        
 \end{vmatrix}}{\begin{vmatrix}
        a_2 & b_2 \\
        a_1 & b_1
 \end{vmatrix}
\begin{vmatrix}
        a_2 & b_2 \\
        a_3 & b_3
 \end{vmatrix}
}\;.
\end{equation}
The condition that $\langle f_1, f_2\rangle$, $\langle f_2, f_3\rangle$
  and $\langle f_3, f_1\rangle$ are zero-dimensional ideals ensures
  that all the determinants in the denominators are nonzero. Then
  explicitly
  \begin{equation}
    \label{eq:11}
    \det A_1+\det A_2 +\det A_3=0\;,
  \end{equation}
by Laplace expansion of complementary minors. Then from the
transformation law, (\ref{triple_lemma}) is proven.
\end{proof}
Similarly, it is easy to check that the identity holds when one of the $f_i$
is a power of a
linear function or an irreducible quadratic polynomial. In our triple
box residue computation, we will use this lemma and its generalized
version to pick up independent residues.  

\subsection{Global Residue Theorem}
It is well known that for the univariate complex analysis, the sum of
residues of meromorphic differential form on $\mathbb {CP}^1$ is zero,
by the
residue theorem. We review the
multivariate version, the global residue theorem, in this
subsection. This theorem relates the residues in $\mathbb C^n$ and
residues at infinity, so it is crucial for the study of the global
structure of residues. 

\begin{thm}[Global residue residue, GRT] Let $M$ be a $n$-dimensional compact
complex manifold, $D_1,\dots,D_n$ be divisors of $M$. Assume that the
intersection $S=D_1 \cap D_2 \cdots \cap D_n$
only consists of discrete points. $\omega$ is a holomorphic
differential $n$-form on $M-D_1\cup D_2 \cdots \cup D_n$. Then
regarding the divisor list $\{D_1, D_2, \dots, D_n\}$,
\begin{equation}
  \label{GRT}
  \sum_{P\in S} \Res{}_P(\omega)=0\;.
\end{equation}
\end{thm}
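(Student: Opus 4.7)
The plan is to derive the identity from Stokes' theorem applied to $\omega$ on a suitable real $(n+1)$-chain in the complement $M\setminus D$, where $D=D_1\cup\cdots\cup D_n$. Two essential ingredients make this work: first, a holomorphic $n$-form on a complex $n$-manifold is automatically closed ($\bar{\partial}\omega=0$ by holomorphy and $\partial\omega=0$ for type reasons, since there are no nonzero $(n+1,0)$-forms), so $d\omega=0$ on $M\setminus D$; and second, the compactness of $M$, which lets us realise all the local residue tori simultaneously as the boundary of a single compact chain.

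First I would set up global analogues of the local tori. Choose Hermitian metrics $h_i$ on each line bundle $\mc{O}_M(D_i)$ together with holomorphic sections $s_i$ cutting out $D_i$, and consider the real non-negative functions $x\mapsto|s_i(x)|_{h_i}$ on $M$. For a generic tuple $\epsilon=(\epsilon_1,\dots,\epsilon_n)$ of small positive radii (genericity supplied by Sard's theorem applied to the smooth map $x\mapsto(|s_1|_{h_1},\dots,|s_n|_{h_n})$ on $M\setminus D$), the joint level set $T(\epsilon)=\{x\in M:|s_i(x)|_{h_i}=\epsilon_i\text{ for all }i\}$ is a smooth closed real $n$-manifold inside $M\setminus D$. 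Because $\bigcap_i D_i=S$ is finite, for $\epsilon$ small enough every connected component of $T(\epsilon)$ lies in a small coordinate neighborhood of a unique $P\in S$ and coincides there with the torus used to define $\Res{}_P(\omega)$.

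Next I would construct the chain. Singling out the last coordinate, set
\[
\Sigma=\{x\in M:|s_j(x)|_{h_j}=\epsilon_j\text{ for }j=1,\dots,n-1,\ \text{and}\ |s_n(x)|_{h_n}\geq\epsilon_n\}.
\]
Transversality makes the first $n-1$ equations cut out a smooth closed (hence compact) real $(n+1)$-manifold in $M$, and imposing the inequality carves out a compact submanifold-with-boundary whose boundary is precisely $T(\epsilon)=\bigsqcup_{P\in S}\Gamma_P$. Since every $|s_j|_{h_j}$ is strictly positive on $\Sigma$, the chain lies entirely in $M\setminus D$, where $\omega$ is smooth and closed. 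Stokes' theorem then yields
\[
0=\int_{\Sigma}d\omega=\int_{\partial\Sigma}\omega=\sum_{P\in S}\int_{\Gamma_P}\omega=(2\pi i)^n\sum_{P\in S}\Res{}_P(\omega),
\]
which is the global residue theorem.

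The main obstacle will be matching orientations: the boundary orientation inherited by each $\Gamma_P$ from $\Sigma$ must agree with the orientation $d(\arg f_1)\wedge\cdots\wedge d(\arg f_n)$ stipulated in the definition of the local residue. This reduces to a local check in coordinates in which each section $s_i$ plays the role of the defining function $f_i$, performed consistently with the ordering of the divisor list $\{D_1,\dots,D_n\}$ so that all the signs in the final Stokes identity cohere. Provided this bookkeeping is carried out uniformly across the points of $S$, the theorem follows.
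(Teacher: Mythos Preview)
The paper does not supply its own proof of this theorem; it simply refers the reader to Griffiths--Harris \cite{MR507725}. Your argument is essentially the classical Stokes-theorem proof found there: one uses that a holomorphic $n$-form on a complex $n$-fold is automatically closed, realises the local residue cycles $\Gamma_P$ as the joint level set of global defining data $|s_i|_{h_i}=\epsilon_i$, and exhibits an $(n{+}1)$-chain in $M\setminus D$ bounding $\bigsqcup_{P\in S}\Gamma_P$. Your construction of $\Sigma$ by relaxing one of the radii to an inequality is the standard way to produce this chain; compactness of $M$ guarantees $\Sigma$ is compact, and the conditions $|s_j|_{h_j}=\epsilon_j$ ($j<n$), $|s_n|_{h_n}\geq\epsilon_n$ indeed keep $\Sigma$ away from every component of $D$, so $\omega$ is smooth and closed on it. The compactness argument you sketch (that for small $\epsilon$ the torus $T(\epsilon)$ localises near $S$) is exactly what is needed to identify $\partial\Sigma$ with the sum of local residue cycles. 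The orientation bookkeeping you flag is genuine but routine, and with it the proof is complete and correct.
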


A proof for this theorem can be found in \cite{MR507725}. Here are some explanations for this theorem: locally, the complex manifold $M$
is $\mathbb C^n$, so we can set up a local coordinate system
$(z_1,\dots, z_n)$. Also,
locally each divisor $D_i$ is an analytic hypersurface defined by
$f_i(z_1,\dots, z_n)=0$. Then the residue regarding $\{D_1, \dots,
D_n\}$ at $P$ is defined to be the local residue at $P$ regarding
$\{f_1, \dots, f_n\}$. From previous discussion, the value of the
residue is independent of the coordinate choice or local rescaling
of the $f_i$'s. So it is well-defined.

Although we may start our analysis from $\mathbb C^n$, the manifold
$\mathbb C^n$ is not compact so GRT does not apply directly. So we consider
the projective space $\mathbb{CP}^n$ with the homogenous coordinates $[w_0,
w_1, \dots, w_n]$. $\mathbb{CP}^n$ is covered by the open sets $U_i$,
where $U_i$ is defined to be the set 
$\{[w_0,w_1, \dots, w_n] | w_i\not=0\}$. $\mathbb C^n$ is embedded inside
$\mathbb{CP}^n$ as $U_0$,
\begin{equation}
  \label{eq:3}
  z_1=\frac{w_1}{w_0}\;, \quad \cdots \quad z_n=\frac{w_n}{w_0}\;.
\end{equation}
The points with the coordinate $w_0=0$ are called points at infinity,
which form the space $\mathbb{CP}^{n-1}$. $\mathbb{CP}^n$ is compact
and on which we can use GRT.

\begin{example}
  Consider $n=2$ and the differential form 
  \begin{equation}
    \label{eq:7}
    \omega=\frac{dz_1 \wedge dz_2}{z_1 z_2}\;.
  \end{equation}
We study this form on $\mathbb{CP}^2$. On the patch $U_1$,
$\omega=(dw_0 \wedge dw_1)/(w_0 w_1)$, while on the patch $U_2$,
$\omega=-(dw_0 \wedge dw_2)/(w_0 w_2)$. Hence, $\omega$ is defined on
$\mathbb{CP}^2$ excluding three irreducible hypersurfaces,
\begin{equation}
  \label{eq:9}
  w_0=0\;, \quad w_1=0\;, \quad w_2=0\;.
\end{equation}
To use GRT, we may define the two divisors, $D_1=\{w_0 w_1=0\}$ and
$D_2=\{w_2=0\}$. $\omega$ is defined on $\mathbb{CP}^2-D_1-D_2$. $D_1
\cap D_2=\{[1,0,0], [0,1,0]\}$, so there are two residues. GRT reads,
\begin{equation}
  \label{eq:12}
  \Res{}_{\{D_1,D_2\},[1,0,0]}(\omega)+\Res{}_{\{D_1,D_2\},[0,1,0]}(\omega)=(+1)+(-1)=0\;.
\end{equation}
However, it is also possible to define two divisors $D_1'=\{w_1=0\}$ and
$D_2'=\{w_0 w_2=0\}$. $D_1'
\cap D_2'
=\{[1,0,0], [0,0,1]\}$, so again there are two residues. Again GRT holds,
\begin{equation}
  \label{eq:13}
  \Res{}_{\{D_1',D_2'\},[1,0,0]}(\omega)+\Res{}_{\{D_1',D_2'\},[0,0,1]}(\omega)=(+1)+(-1)=0.
\end{equation}
So in summary, for $\omega$, apparently there is one residue at the origin,
two
residues at infinity. By using GRT twice for different choices of
divisors, we can see that values of the two residues at infinity are both the
opposite to that at origin. 
\end{example}
This simple example demonstrates GRT for multivariate
residues. In contrast to the univariate case, there are many different residue
identities for one differential form, by choosing different
divisors. In this paper, we will use GRT to study the residues at infinity.

\section{Maximal Cuts of Planar Triple Box Integrals}
We consider color-stripped four-gluon scattering amplitudes at three-loops in
gauge theories with $\text{SU}(N_c)$ symmetry group. The complete amplitude
receives contributions of a large number of integral topologies, for instance
the three-loop ladder with one or two crossed boxes and the planar tennis
court. Our focus is here on a primitive amplitude of modest complexity, the
planar triple box, although our method should generalize to other examples in
a straightforward manner.

The dimensionally regularized Feynman scalar integral for the four-point planar
triple box with massless kinematics on both internal and external lines reads
\begin{align}
\IP[1] \equiv 
\int_{\R^D}\frac{d^D\ell_1}{(2\pi)^D}
\int_{\R^D}\frac{d^D\ell_2}{(2\pi)^D}
\int_{\R^D}\frac{d^D\ell_3}{(2\pi)^D}
\prod_{i=1}^{10}
\frac{1}{D_i^2(\ell_1,\ell_2,\ell_3)}\;,
\label{TRIPLEBOXINT}
\end{align}
where the denominators or inverse propagators according to
fig.~\ref{TRIPLEBOXDIAGRAM} are given by
\begin{align}
D_1^2 = {} & \ell_1^2\;, &
D_2^2 = {} & \ell_2^2\;, &
D_3^2 = {} & \ell_3^2\;, &
D_4^2 = {} & (\ell_1+k_1)^2\;, \nn \\
D_5^2 = {} & (\ell_1-k_2)^2\;, &
D_6^2 = {} & (\ell_2+k_3)^2\;, &
D_7^2 = {} & (\ell_2-k_4)^2\;, &
D_8^2 = {} & (\ell_3+K_{12})^2\;, \nn \\
D_9^2 = {} & (\ell_1-\ell_3-k_2)^2\;, &
D_{10}^2 = {} & (\ell_3-\ell_2-k_3)^2\;.
\end{align}
All external momenta are outgoing and summed as 
$K_{i_1\cdots i_n} = k_{i_1}+\cdots+k_{i_n}$. Our definition of the triple box
integral is equivalent to that in \cite{Badger:2012dv} up to a linear
transformation of each of the three loop momenta,
\begin{align}
\tilde\ell_1 = \ell_1+k_1\;, \quad
\tilde\ell_2 = -\ell_2+k_4\;, \quad
\tilde\ell_3 = -\ell_3-K_{12}\;.
\end{align}
In general, \eqref{TRIPLEBOXINT} has a nontrivial polynomial numerator
function $\mc P(\ell_1,\ell_2,\ell_3)$ inserted and is in that situation
referred to as a tensor integral, although all Lorentz indices are properly
contracted. We specialize to four dimensions and therefore only reconstruct
the master integral coefficients to leading order in the dimensional regulator
$\epsilon$, leaving the general case to future work.
\begin{figure}[!h]
\bc
\includegraphics[scale=0.65]{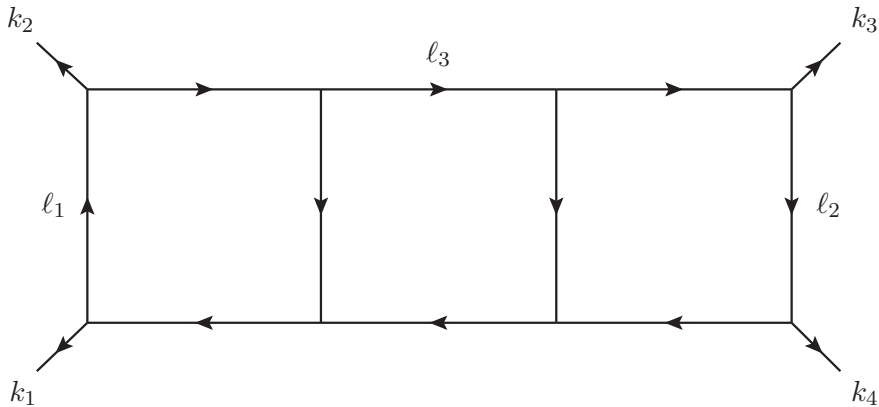}
\put(-315,-8){$k_1$}
\put(-316,133){$k_2$}
\put(0,133){$k_3$}
\put(0,-8){$k_4$}
\put(-303,63){$\ell_1$}
\put(-159,120){$\ell_3$}
\put(-13,63){$\ell_2$}
\caption{Momentum flow for the planar triple box.}
\label{TRIPLEBOXDIAGRAM}
\ec
\end{figure}

The ten inverse propagators generate a polynomial ideal 
$I = \avg{D_1,\dots,D_{10}}$ and the deca-cut equations, i.e. all internal
propagators are placed on their mass shell, define a two-dimensional algebraic
variety (or four-dimensional real surface) which is the zero locus of $I$,
usually denoted $\mc Z(I)$. In our notation, the solution set is 
\begin{align}
\mc S = \big\{(\ell_1,\ell_2,\ell_3)\in(\C^4)^{\otimes 3}\;|\;
D_i^2 = 0\,, \;\; i = 1,\dots,10\,\big\}\;.
\end{align}
By primary decomposition of the ideal it can be proven that the variety is 
reducible and the algebraic set can be decomposed uniquely into a union of 
fourteen components which are in one-to-one correspondence with the inequivalent
deca-cut solutions \cite{Badger:2012dv,Zhang:2012ce},
\begin{align}
\mc S = \bigcup_{i=1}^{14}\mc S_i\;, 
\quad \mc S_i\not{\!\!\subset}\,\mc S_j \;\; \mathrm{if}\;\; i\neq j\;.
\end{align}
The branches come in complex conjugate pairs, although, technically speaking,
identification by complex conjugation presumes reality of the external
momenta. We label the solutions $\mc S_1,\dots,\mc S_{14}$ such that $\mc S_i$
and $\mc S_{i+7}$ form a pair for $i = 1,\dots,7$. Note that $\mc S_i$ is not
a Riemann sphere because of the dimension.

This number of solutions is actually expected from general considerations of the 
$L$-loop ladder topology with chiral vertices. Indeed, such a diagram splits
into $2(L+1)$ three-point trees on the maximal cut with $(3L+1)$ propagators
placed on-shell. Let $N_L$ be the number of solutions at $L$ loops with $N_1 =
2$ and $N_2 = 6$. We can consider the distribution of holomorphically and
anti-holomorphically collinear vertices and easily realize that $N_L =
2(2^L-1)$. In particular, $N_3 = 14$, $N_4 = 30$, $N_5 = 62$ et cetera.

We proceed from here in three steps: first we solve the cut equations, then
we localize the scalar master integral onto the on-shell branches to expose
the composite leading singularities of the loop integrand, and finally we
derive unique projectors for all master integrals that are consistent with
integral reduction identities from parity-odd terms and total derivatives
which integrate to zero.

\subsection{Parametrization of On-Shell Solutions}
Uniformity of the dimension across all deca-cut branches implies that we can
parametrize the on-shell surfaces with two variables $(z_1,z_2)\in\C^2$. We 
exploit experience from previous calculations at one and two loops and choose 
convenient normalizations and basis elements such that the global poles of the 
integrand are directly exposed and as many of the decacut constraints as 
possible are linearized. Our parametrization of the three independent loop 
momenta with two-component Weyl spinors follows that of \cite{Badger:2012dv}
and reads
\begin{align}
\ell_1^\mu(\alpha_1,\dots,\alpha_4) = {} &
\alpha_1 k_1^\mu+\alpha_2 k_2^\mu+
\frac{\alpha_3}{2}\frac{\avg{23}}{\avg{13}}\expval{1^-}{\gamma^\mu}{2^-}+
\frac{\alpha_4}{2}\frac{\avg{13}}{\avg{23}}\expval{2^-}{\gamma^\mu}{1^-}\;,
\nn \\[1mm]
\ell_2^\mu(\beta_1,\dots,\beta_4)\, = {} &
\beta_1 k_3^\mu+\beta_2 k_4^\mu+
\frac{\beta_3}{2}
\frac{\avg{14}}{\avg{13}}
\expval{3^-}{\gamma^\mu}{4^-}+
\frac{\beta_4}{2}
\frac{\avg{13}}{\avg{14}}
\expval{4^-}{\gamma^\mu}{3^-}\;.
\nn \\[1mm]
\ell_3^\mu(\gamma_1,\dots,\gamma_4)\, = {} &
\gamma_1 k_2^\mu+\gamma_2 k_3^\mu+
\frac{\gamma_3}{2}
\frac{\avg{34}}{\avg{24}}
\expval{2^-}{\gamma^\mu}{3^-}+
\frac{\gamma_4}{2}
\frac{\avg{24}}{\avg{34}}
\expval{3^-}{\gamma^\mu}{2^-}\;,
\end{align}
All parameters are complex valued. In advance of calculations below we define
a frequently used ratio of Mandelstam invariants,
\begin{align}
\chi\equiv\frac{s_{14}}{s_{12}} = \frac{t}{s}\;.
\end{align}

In order to compensate for the change of variables from momenta to parameters 
in the triple box integral we in principle have to include the three Jacobian 
determinants
\begin{align}
J_\alpha = \det_{\mu,i}\pd{\ell_1^\mu}{\alpha_i} =
\frac{s_{12}^2}{4i}\;, \quad 
J_\beta = \det_{\mu,i}\pd{\ell_2^\mu}{\beta_i} =
\frac{s_{12}^2}{4i}\;, \quad
J_\gamma = \det_{\mu,i}\pd{\ell_3^\mu}{\gamma_i} =
\frac{s_{14}^2}{4i}\;.
\label{PARAMETERJACOBIANS}
\end{align}
They are however constant as an immediate consequence of linearity of the 
parametrization of the loop momenta and can thus be disregarded in the 
augmented deca-cut in what proceeds.

Let us write out the deca-cut equations using the parametrizations of
$\ell_1$, $\ell_2$ and $\ell_3$. It is a straightforward task and we
immediately derive all equations involving only one loop momentum and one
extenral leg,
\begin{align}
D_1^2 = {} & s_{12}\big(\alpha_1\alpha_2-\alpha_3\alpha_4\big) = 0\;, &
D_2^2 = {} & s_{12}\big(\beta_1\beta_2-\beta_3\beta_4\big) = 0\;, \nn \\[1mm]
D_3^2 = {} & s_{14}\big(\gamma_1\gamma_2-\gamma_3\gamma_4\big) = 0\;, &
D_4^2 = {} & s_{12}\big((\alpha_1+1)\alpha_2-\alpha_3\alpha_4\big) = 0\;, \nn \\[1mm]
D_5^2 = {} & s_{12}\big(\alpha_1(\alpha_2-1)-\alpha_3\alpha_4\big) = 0\;, &
D_6^2 = {} & s_{12}\big((\beta_1+1)\beta_2-\beta_3\beta_4\big) = 0\;, \nn \\[1mm]
D_7^2 = {} & s_{12}\big(\beta_1(\beta_2-1)-\beta_3\beta_4\big) = 0\;,
\end{align}
Moreover, we have
\begin{align}
D_8^2 = {} & s_{12}\big(1+\gamma_1-\gamma_2-\gamma_3+
(1+\chi)\gamma_4+\chi(\gamma_1\gamma_2-\gamma_3\gamma_4)\big) = 0\;.
\end{align}
For generic external kinematics, we readily get the trivial constraints 
$\alpha_1 = \alpha_2 = 0$ and $\beta_1 = \beta_2 = 0$ together with 
$\alpha_3\alpha_4 = 0$ and $\beta_3\beta_4 = 0$. It then remains to consider
the inverse propagators on the two middle rungs of the triple box. We omit for 
brevity the full expansions in parameter space, and display only equations 
simplified on the constraints for $\ell_1$ and $\ell_2$ above,
\begin{align}
D_9^2 = {} &
s_{12}\left\{-\alpha_3\left[\alpha_4+\chi\left(\gamma_2
+\frac{\gamma_3}{1+\chi}\right)\right]
+\chi(\gamma_2(1+\gamma_1)
-\gamma_3\gamma_4)+(1+\chi)\alpha_4(\gamma_2-\gamma_4)\right\}\;, \nn \\[1mm]
D_{10}^2 = {} &
s_{12}\left\{
-\left(\gamma_4+\frac{\beta_3}{1+\chi}\right)[(1+\chi)\beta_4+\chi\gamma_3]
-\gamma_1(\beta_4+(1-\gamma_2-\beta_3+\beta_4)\chi)\right\}\;.
\end{align}

The complete set of local solutions to these equations were reported in
\cite{Badger:2012dv}. We quote the result for $\mc S_1,\dots,\mc S_7$ below
for completeness. Also, the complex conjugates denoted with primes can be
constructed easily using the following relations valid for each pair,
\begin{align}
\alpha_1' = {} & 0\;, & \alpha_2' = {} & 0\;, &
\alpha_3' = {} & -\frac{1+\chi}{\chi}\alpha_4\;, &
\alpha_4' = {} & -\frac{\chi}{1+\chi}\alpha_3\;, \nn \\
\beta_1' = {} & 0\;, & \beta_2' = {} & 0\;, & 
\beta_3' = {} & -\frac{1+\chi}{\chi}\beta_4\;, &
\beta_4' = {} & -\frac{\chi}{1+\chi}\beta_3\;, \nn \\
\gamma_1' = {} & \gamma_1\;, & \gamma_2' = {} & \gamma_2\;, & 
\gamma_3' = {} & -(1+\chi)\gamma_4\;, &
\gamma_4' = {} & -\frac{1}{1+\chi}\gamma_3\;. \nn
\end{align}

On the maximal cut, all eight vertices have three massless legs attached. We
point out that the fourteen on-shell branches are in one-to-one correspondence
with the valid kinematical configurations of three-point trees in the
maximally cut planar triple box such that no external legs are neither
holomorphically nor antiholomorphically collinear for generic momenta. 
All diagrams not related by complex conjugation are included in
Appendix~\ref{TREESMP}.
\begin{table}[!h]
\begin{align}
\begin{array}{|c|c|c|c|c|c|c|c|c|}
\hline
\;&\; \alpha_1 \;&\; \alpha_2 \;&\; \alpha_3 \;&\; \alpha_4 
\;&\; \beta_1 \;&\; \beta_2 \;&\; \beta_3 \;&\; \beta_4 \\  
\hline
\;\mc S_1 
\;&\,\; 0 \;\,&\,\; 0 \;\,&\,\; 1-\frac{1}{\chi}\frac{1+z_1}{z_2} 
\;\,&\,\; 0 \;\,&\,\; 0 \;\,&\,\; 0 \;\,&\,\; 1+z_2 \;\,&\,\; 0 \\
\;\mc S_2 
\;&\,\; 0 \;\,&\,\; 0 \;\,&\,\; \left(1+\frac{1}{\chi}\right)(1+z_1) \;\,&\,\; 0
\;\,&\,\; 0 \;\,&\,\; 0 \;\,&\,\; 0 \;\,&\,\; -1-\frac{1}{1+\chi}\frac{1}{z_1} \\
\;\mc S_3
\;&\,\; 0 \;\,&\,\; 0 \;\,&\,\; z_2 \;\,&\,\; 0
\;\,&\,\; 0 \;\,&\,\; 0 \;\,&\,\; 0 \;\,&\,\; -1-z_1 \\
\;\mc S_4
\;&\,\; 0 \;\,&\,\; 0 \;\,&\,\; \left(1+\frac{1}{\chi}\right)(1+z_1) 
\;\,&\,\; 0 \;\,&\,\; 0 \;\,&\,\; 0 \;\,&\,\; 0 \;\,&\,\; z_2 \\
\;\mc S_5
\;&\,\; 0 \;\,&\,\; 0 \;\,&\,\; z_2 \;\,&\,\; 0
\;\,&\,\; 0 \;\,&\,\; 0 \;\,&\,\; 1 \;\,&\,\; 0 \\
\;\mc S_6
\;&\,\; 0 \;\,&\,\; 0 \;\,&\,\; 1 \;\,&\,\; 0
\;\,&\,\; 0 \;\,&\,\; 0 \;\,&\,\; z_2 \;\,&\,\; 0 \\
\;\mc S_7
\;&\,\; 0 \;\,&\,\; 0 \;\,&\,\; z_1 \;\,&\,\; 0
\;\,&\,\; 0 \;\,&\,\; 0 \;\,&\,\; z_2 \;\,&\,\; 0 \\
\hline
\end{array}
\nn
\end{align}
\label{TRIPLEBOXSOLUTIONS}
\caption{Values of the coefficients in the outermost loop momenta $\ell_1$ and 
$\ell_2$ written in terms of the two unfrozen parameters $(z_1,z_2)\in\C^2$.}
\end{table}
\begin{table}[!h]
\begin{align}
\begin{array}{|c|c|c|c|c|}
\hline
\,&\, \gamma_1 \,&\, \gamma_2 \,&\, \gamma_3 \,&\, \gamma_4 \\
\hline
\,\mc S_1 
\,&\, 
\frac{1}{\chi}\left(1+\frac{1}{z_1}\right)\left(1+\frac{1}{z_2}\right) 
\,&\, 
-\frac{1}{\chi}\left(1+\frac{1}{z_1}\right)+\frac{z_2}{z_1}
\,&\, 
\frac{1+\chi}{\chi}\left(1+\frac{1}{z_1}\right) 
\,&\, \beta_4(\mc S_1) \\
\,\mc S_2 
\,&\, 
\frac{1}{\chi}(1+z_2)(1+(1+\chi)z_1) 
\,&\,
\frac{1}{\chi}\left(1+\frac{1}{z_1}\right)z_2
\,&\, 
-\left(\frac{1+\chi}{\chi}+\frac{1}{\chi}\frac{1}{z_1}\right)z_2
\,&\, 
-\frac{1}{\chi}(1+z_1)(1+z_2) \\ 
\,\mc S_3
\,&\, \frac{1}{\chi}\left(1+\frac{1}{z_1}\right) \,&\, 0 \,&\, 0 
\,&\, -\frac{1}{\chi}\left(1+\frac{1}{1+\chi}\frac{1}{z_1}\right) \\
\,\mc S_4
\,&\, 0 \,&\, -\frac{1}{\chi}\left(1+\frac{1}{z_1}\right) 
\,&\, \frac{1}{\chi}\frac{1}{z_1}+\frac{1+\chi}{\chi}
\,&\, 0 \\ 
\,\mc S_5
\,&\, z_1 \,&\, 0 \,&\, 0 \,&\, -\frac{1}{1+\chi}(1+z_1) \\
\,\mc S_6
\,&\, 0 \,&\, z_1 \,&\, 0 \,&\, -\frac{1}{1+\chi}(1-z_1) \\
\,\mc S_7
\,&\, 0 \,&\, 0 \,&\, 0 \,&\, -\frac{1}{1+\chi} \\
\hline
\end{array}
\nn
\end{align}
\label{TRIPLEBOXSOLUTIONSMIDDLE}
\caption{Solutions for loop momentum $\ell_3$.
$\beta_4(\mc S_1)\equiv
-\frac{1}{\chi(1+\chi)}\left(1+\frac{1}{z_1}\right)\left(1+\frac{1}{z_2}\right)
+\frac{1}{1+\chi}\frac{1+z_2}{z_1}$.}
\end{table}

\subsection{Composite Leading Singularities}
Loop momenta that solve simultaneous on-shell contraints are complex valued
for general external kinematics. As a direct conseqeuence, the traditional
unitarity cut procedure involving several Dirac delta functions that enforce
on-shell conditions would yield a trivial equation reflecting that their
support is zero and this is not sufficient to extract the integral
coefficients. This leads to an obstacle already for quadruple cuts at one
loop in the work by Britto, Cachazo and Feng \cite{Britto:2004nc}, later
clarified by Kosower and Larsen \cite{Kosower:2011ty}.

Instead of a integration region that comprises real Minkowski space, loop
integrals should really be reinterpreted in complex momenta,
$\ell_i^\mu\in\C^4$, evaluated along the real slices $\text{Im}\,\ell_i^\mu =
0$. Generalized unitarity cuts that are also appropriate to complex solutions
may then be realized by replacement of contours by a higher-dimensional
surface that is topologically equivalent to a torus encircling a global poles
of the loop integrand. Recall the transformation of localization property of
multidimensional contour integrals, which we immediately recognize as the
natural generalization of the Dirac delta function to several complex
variables once we for $\xi\in\C^n$ apply the substitution
\begin{align}
\int dz_1\dots dz_n h(z_i)\prod_{j=1}^n\delta(z_j-\xi_j)\longrightarrow
\frac{1}{(2\pi i)^n}\oint_{\Gamma_\epsilon(\xi)}\!
dz_1\wedge\cdots \wedge dz_n\frac{h(z_i)}{\prod_{j=1}^n(z_j-\xi_j)}\;.
\end{align}
Notice that, crucially, analyticity of the integrand is maintained in virtue
of absence of absolute values in the Jacobian. As a consequence we are able to
carry out successive contour integations. 

Let us apply this strategy to compute the localization of the triple box
scalar integral onto the fourteen on-shell branches and begin to explore the
composite leading singularities of the integrand. Each integral must be
treated individually. However, due to the appearance of complex conjugate
pairs of solutions to the deca-cut equations, we expect at most seven distinct
remaining two-dimensional complex integrals. The general object of our
interest is
\begin{align}
\IP[1]_{\mc S_i} \equiv \frac{J_\alpha J_\beta J_\gamma}{(2\pi i)^{12}}
\oint_{T_{\alpha,i}}\!\frac{d^4\alpha}{(2\pi)^4}
\oint_{T_{\beta,i}}\!\frac{d^4\beta}{(2\pi)^4}
\oint_{T_{\gamma,i}}\!\frac{d^4\gamma}{(2\pi)^4}
\prod_{k=1}^{10}\frac{1}{D_k^2(\alpha,\beta,\gamma)}\;,
\end{align}
where $T_{\alpha,i}\times T_{\beta,i}\times T_{\gamma,i}$ is the 
twelwe-dimensional leading singularity cycle associated with solution 
$\mc S_i$ and the prefactors are given in \eqref{PARAMETERJACOBIANS}.
This multivariate residue is nondegenerate, hence performing ten of the
integrals by means of the localization property for multidimensional contour
integrals \eqref{non_degenerate_residue} leaves us with a two-dimensional
contour integral and the deca-cut Jacobian $J_i$ defined via
\begin{align}
\IP[1]_{\mc S_i} = 
\frac{J_\alpha J_\beta J_\gamma}{(2\pi i)^2}
\oint\!\frac{dz_1\wedge dz_2}{J_i(z_1,z_2)}\;.
\end{align}

To set the stage we derive the expression for $J_1$ in full detail. In this 
case we keep integrals over $\alpha_3$ and $\beta_3$. Other choices yield 
equivalent end results as long as the coordinate transformation to $z_1$ and 
$z_2$ is nonsingular. Indeed, this change of variables introduces of course yet
another Jacobian which is trivial to compute. Anyway, the Jacobian that arise
upon cutting the ten propagators on-shell is block diagonal and decomposes into 
three determinants. In particular, we denote the Jacobian from the six deca-cut
equations involving only either loop momentum $\ell_1$ or $\ell_2$ by $J_A$, 
\begin{align}
\hspace*{-1mm}
J_A^{-1} = \frac{1}{(2\pi i)^6s_{12}^6}
\oint_{C^3_\epsilon(0)}d^3\alpha
\frac{1}{\alpha_1\alpha_2-\alpha_3\alpha_4}
\frac{1}{(\alpha_1+1)\alpha_2-\alpha_3\alpha_4}
\frac{1}{\alpha_1(\alpha_2-1)-\alpha_3\alpha_4} 
\nn \hspace*{1cm} \\[2mm] \times
\oint_{C^3_\epsilon(0)}d^3\beta
\frac{1}{\beta_1\beta_2-\beta_3\beta_4}
\frac{1}{(\beta_1+1)\beta_2-\beta_3\beta_4}
\frac{1}{\beta_1(\beta_2-1)-\beta_3\beta_4}\;,
\end{align}
in a notation that should be self-explanatory. It is then easy to get
\begin{align}
J_A = s_{12}^6\det\left(
\begin{array}{ccc}
\alpha_2 & \alpha_1 & -\alpha_3 \\
\alpha_2 & \alpha_1+1 & -\alpha_3 \\
\alpha_2-1 & \alpha_1 & -\alpha_3
\end{array}
\right) 
\det\left(
\begin{array}{ccc}
\beta_2 & \beta_1 & -\beta_3 \\
\beta_2 & \beta_1+1 & -\beta_3 \\
\beta_2-1 & \beta_1 & -\beta_3
\end{array}
\right) = 
s_{12}^6\alpha_3\beta_3\;.
\end{align}
The next step is to integrate over the four constraints involving $\ell_3$ and
evaluate the determinant which we call $J_B$. Prior to insertion of the
explicit cut solution, this expression expands to a slightly complicated form
which we do not include here,
\begin{align}
J_B = 
\chi^4s_{12}^4\det\left(
\begin{array}{cccc}
\gamma_2 & \gamma_1 & -\gamma_4 & -\gamma_3 \\
\gamma_2+\frac{1}{\chi} & \gamma_1-\frac{1}{\chi} & -\gamma_4-\frac{1}{\chi} &
-\gamma_3+1+\frac{1}{\chi} \\
\gamma_2 & 1-\alpha_3+\gamma_1 & -\gamma_4-\frac{\alpha_3}{1+\chi} & -\gamma_3 \\
-1+\gamma_2+\beta_3 & \gamma_1 & -\gamma_4-\frac{\beta_3}{1+\chi} & -\gamma_3
\end{array}
\right)\;.
\end{align}
Finally we apply the coordinate transformation $(\alpha_3,\beta_3)\to(z_1,z_2)$ 
and obtain the integral
\begin{align}
\IP[1]_{\mc S_1} = +\frac{1}{(2\pi i)^2\chi^2s_{12}^{10}}\oint
\frac{dz_1\wedge dz_2}{(1+z_1)(1+z_2)(1+z_1-\chi z_2)}\;.
\end{align}

We repeated the calculation for the remaining thirteen deca-cut solutions and
verified that complex conjugate pairs indeed have the same Jacobian up to an
ambiguous overall sign due to antisymmetry of determinants, i.e. 
\begin{align}
\IP[1]_{\mc S_{i+7}} = \IP[1]_{\mc S_i}
\end{align}
for $i = 1,\dots,7$. In summary, we found the following relatively simple 
integrals
\begin{gather}
\IP[1]_{\mc S_2} = -\frac{1}{(2\pi i)^2\chi^2s_{12}^{10}}\oint
\frac{dz_1\wedge dz_2}{(1+z_1)(1+z_2)(1+(1+\chi)z_1)z_2}\;, \\[3mm]
\IP[1]_{\mc S_3} = -\frac{1}{(2\pi i)^2\chi^2s_{12}^{10}}\oint
\frac{dz_1\wedge dz_2}{(1+z_1)z_2(1+z_1[1+\chi(1-z_2)])}\;, \\[3mm]
\IP[1]_{\mc S_4} = -\frac{1}{(2\pi i)^2\chi^2s_{12}^{10}}\oint
\frac{dz_1\wedge dz_2}{(1+z_1)z_2(1+(1+\chi)z_1(1+z_2))}\;, \\[3mm]
\IP[1]_{\mc S_5} = -\frac{1}{(2\pi i)^2\chi^3s_{12}^{10}}\oint
\frac{dz_1\wedge dz_2}{z_1z_2(1+z_1-z_2)}\;, \\[3mm]
\IP[1]_{\mc S_6} = +\frac{1}{(2\pi i)^2\chi^3s_{12}^{10}}\oint
\frac{dz_1\wedge dz_2}{z_1z_2(1-z_1-z_2)}\;, \\[3mm]
\IP[1]_{\mc S_7} = -\frac{1}{(2\pi i)^2\chi^3s_{12}^{10}}\oint
\frac{dz_1\wedge dz_2}{z_1z_2(1-z_1)(1-z_2)}\;,
\end{gather}
We emphasize that the number of polynomials in the denominators and hence
residues to consider differs from branch to branch, which is not the case in
the two-loop double box, reflecting a more complicated global topological
picture of the underlying algebraic variatey. In general, these contour
integrals have a nontrivial polynomial numerator which is evaluated on the
particular branch. Although the Jacobian computations are not affected by the
presence of a numerator function, the analytic structure of the integrands may
change significantly.

Notice that the cut integrals in solution $\mc S_7$ and its complex conjugate
are particularly simple because they separate onto univariate contours with
factorizable residues which can be computed iteratively as in 
\eqref{factorizable_residue}.

\subsection{Global Poles and Augmentation of Residues}
We expand the planar triple box contribution to the four-gluon three-loop
amplitude onto an integral basis of three master integrals, say,
\begin{align}
A_{4,(3)}^{\square\!\square\!\square}
(1^{\lambda_1},2^{\lambda_2},3^{\lambda_3},4^{\lambda_4}) = {} &
C_1(1^{\lambda_1},2^{\lambda_2},3^{\lambda_3},4^{\lambda_4})\IP[1]+ \nn \\[1mm] {} &
C_2(1^{\lambda_1},2^{\lambda_2},3^{\lambda_3},4^{\lambda_4})
\IP[(\tilde\ell_1+k_4)^2]+ \nn \\[2mm] {} &
C_3(1^{\lambda_1},2^{\lambda_2},3^{\lambda_3},4^{\lambda_4})
\IP[(\tilde\ell_3-k_4)^2]+
\cdots
\label{TRIPLEBOXMIEQ}
\end{align}
where the ellipses cover terms with less than ten propagators which cannot be
probed by maximal unitarity. It of course possible to construct other 
equivalent integral bases for the ladder topology via integration-by-parts 
identities, however this choice allows to directly compare with the results of 
Badger, Frellesvig and Zhang. The analytic expression of the scalar master
integral $\IP[1]$ is known in dimensional regularization
\cite{Smirnov:2003vi}.

Our problem is now reduced to extract the three master integral coefficients to
$\mc O(\epsilon^0)$ from the augmented deca-cut of \eqref{TRIPLEBOXMIEQ}. It is
customary to strip off universal prefactors and define dimensionless 
coefficients $\hat C_i$ by
\begin{align}
C_i(1^{\lambda_1},2^{\lambda_2},3^{\lambda_3},4^{\lambda_4}) =
s_{12}^3s_{14}A_4^\tree(1^{\lambda_1},2^{\lambda_2},3^{\lambda_3},4^{\lambda_4})
\hat C_i(1^{\lambda_1},2^{\lambda_2},3^{\lambda_3},4^{\lambda_4})\;,
\end{align}
where the four-gluon tree amplitude is given by the Parke-Taylor formula, for
instance for helicities $--++$,
\begin{align}
A_4^\tree(1^-,2^-,3^+,4^+) =
i\frac{\avg{12}^4}{\avg{12}\avg{23}\avg{34}\avg{41}}\;.
\end{align}
Notice that we form now on leave helicity labels on coefficients implicit. Then
we can write down the augmented deca-cut of \eqref{TRIPLEBOXMIEQ}
schematically. The triple box amplitude contribution falls apart into a
product of eight on-shell three-particle amplitudes such that
\begin{align}
\frac{1}{(2\pi i)^2}
\sum_{i=1}^{14}\oint_{\Gamma_i}\frac{dz_1\wedge dz_2}{J_i(z_1,z_2)}
\sum_{\substack{\text{helicities}\\\text{particles}}}
\prod_{j=1}^8 A_{(j)}^\tree(z_1,z_2)\big|_{\mc S_i} = \hspace*{6cm} \nn 
\\[-2mm]
C_1\IP[1]_{\cut}+
C_2\IP[(\tilde\ell_1+k_4)]_{\cut}+
C_3\IP[(\tilde\ell_3-k_4)]_{\cut}\;,
\label{AUGDECA}
\end{align}
where the tree-level data evaluated in the deca-cut kinematics defines fourteen
multivariate Laurent polynomials,
\begin{align}
\sum_{\substack{\text{helicities}\\\text{particles}}}
\prod_{j=1}^8 A_{(j)}^\tree(z_1,z_2)\big|_{\mc S_i} = 
\sum_{j,k} d_{i;jk}z_1^jz_2^k\;.
\end{align}
All coefficients with $|i|$ or $|j| > 4$ vanish in all gauge theories
even though higher order terms are allowed in renormalization
\cite{Badger:2012dv}. It follows from the explicit solutions that the Laurent
series for $\mc S_1$ and its complex conjugate $\mc S_8$ contain terms with
negative powers of $z_1$ and $z_2$ in the numerators. Furthermore, parameters
in $\mc S_2,\dots,\mc S_4$ and their complex conjugates contain negative powers of
$z_1$. These will also give rise to global poles. 

We therefore accordingly construct $14$ residue loci from all combinations of
two vanishing polynomial factors in the deca-cut Jacobians and the possible
extra denominators $z_1$, $z_2$. Here we list all residues for finite values
of $(z_1, z_2)$. The residues are labeled by the number of branch in Roman
numerals, and its order in this branch. Multivariate residues also
depend on the choice and order of the denominators. So in 
$(...)$ we give the location of the residue, and in $\{\ldots\}$ we
list the denominators vanishing on the residue. There are also
residues at infinity. However, we prove that for the maximal cut of the
triple box, all residues at infinity are related to poles at finite loci by
the {\it Global residue theorem (GRT)}. So we do not list residues at
infinity and we give the proof in Appendix \ref{poles_at_infinity}.
\begin{itemize}
\item Branch $\mc S_1$: $8$ residues for finite $z_1$, $z_2$:

\begin{tabular}{ll}
I1\; $(-1,-1)\,,\; \{1+z_1,1+z_2\}$ & I2\; $(-1,0)\,,\;
\{(1+z_1)z_2,1+z_1-\chi z_2\}$\\
I3\; $(-1,0)\,,\; \{z_2(1+z_1-\chi z_2),1+z_1\}$ & I4\; $(-1,0)\,,\;
\{(1+z_1)(1+z_1-\chi z_2),z_2\}$\\
I5\; $(0,-1)\,,\; \{z_1,1+z_2\}$ & I6\; $(0,0)\,,\; \{z_1,z_2\}$\\
I7\; $(0,1/\chi)\,,\; \{z_1,1+z_1-\chi z_2\}$ & I8\; $(-1-\chi,-1)\,,\; \{1+z_2,1+z_1-\chi z_2\}$
\end{tabular}

\item Branch $\mc S_2$: $6$ residues for finite $z_1$, $z_2$:

\begin{tabular}{ll}
II1\; $(-1,-1)\,,\; \{1+z_1,1+z_2\}$ & II2\; $(-1,0)\,,\;
\{1+z_1,z_2\}$\\
II3\;  $(0,-1)\,,\; \{z_1,1+z_2\}$ & II4\; $(0,0)\,,\; \{z_1,z_2\}$\\
II5\;  $(-1/(1+\chi),-1)\,,\; \{1+(1+\chi)z_1,1+z_2\}$ & II6\; $(-1/(1+\chi),0)\,,\; \{1+(1+\chi)z_1,z_2\}$
\end{tabular}

\item Branch $\mc S_3$: $4$ residues for finite $z_1$, $z_2$:

\begin{tabular}{ll}
III1\; $(-1,0)\,,\; \{1+z_1,z_2\}$ & III2\; $(-1,1)\,,\;
\{1+z_1,1+z_1(1+\chi(1-z_2))\}$\\
III3\; $(0,0)\,,\; \{z_2,z_1\}$ & III4\; $(-1/(1+\chi),0)\,,\; \{z_2,1+z_1(1+\chi(1-z_2))\}$
\end{tabular}

\item Branch $\mc S_4$: $4$ residues for finite $z_1$, $z_2$:

\begin{tabular}{ll}
IV1\; $(-1,0)\,,\; \{1+z_1,z_2\}$ & IV2\; $(-1,-\chi/(1+\chi))\,,\; \{1+z_1,1+(1+\chi)z_1(1+z_2)\}$\\
IV3\; $(0,0)\,,\; \{z_1,z_2\}$ & IV4\; $(-(1/(1+\chi),0)\,,\; \{z_2,1+(1+\chi)z_1(1+z_2)\}$
\end{tabular}

\item Branch $\mc S_5$: $4$ residues for finite $z_1$, $z_2$:

\begin{tabular}{ll}
V1\; $(-1,0)\,,\; \{1+z_1-z_2,z_2\}$ & V2\; $(0,0))\,,\; \{z_1,z_2\}$
\\
V3\; $(0,1)\,,\; \{z_1,1+z_1-z_2\}$
\end{tabular}

\item Branch $\mc S_6$: $3$ residues for finite $z_1$, $z_2$:

\begin{tabular}{ll}
VI1\; $(0,0)\,,\; \{z_1,z_2\}$ & VI2\; $(0,1)\,,\; \{z_1,z_1+z_2-1\}$
\\
VI3\; $(1,0)\,,\; \{z_2,z_1+z_2-1\}$
\end{tabular}

\item Branch $\mc S_7$: $4$ residues for finite $z_1$, $z_2$:

\begin{tabular}{ll}
VII1\; $(0,0)\,,\; \{z_1,z_2\}$ & VII2\; $(0,1)\,,\; \{z_1,z_2-1\}$
\\
VII3\; $(1,0)\,,\; \{z_1-1,z_2\}$ & VII4\;$(1,1)\,,\; \{z_1-1,z_2-1\}$
\end{tabular}

\clearpage
\item Branch $\mc S_8$: $8$ residues for finite $z_1$, $z_2$:

\begin{tabular}{ll}
VIII1\; $(-1,-1)\,,\; \{1+z_1,1+z_2\}$ & VIII2\; $(-1,0)\,,\;
\{(1+z_1)z_2,1+z_1-\chi z_2\}$\\
VIII3\; $(-1,0)\,,\; \{z_2(1+z_1-\chi z_2),1+z_1\}$ & VIII4\; $(-1,0)\,,\;
\{(1+z_1)(1+z_1-\chi z_2),z_2\}$\\
VIII5\; $(0,-1)\,,\; \{z_1,1+z_2\}$ & VIII6\; $(0,0)\,,\; \{z_1,z_2\}$\\
VIII7\; $(0,1/\chi)\,,\; \{z_1,1+z_1-\chi z_2\}$ & VIII8\; $(-1-\chi,-1)\,,\; \{1+z_2,1+z_1-\chi z_2\}$
\end{tabular}

\item Branch $\mc S_9$: $6$ residues for finite $z_1$, $z_2$:

\begin{tabular}{ll}
IX1\; $(-1,-1)\,,\; \{1+z_1,1+z_2\}$ & IX2\; $(-1,0)\,,\;
\{1+z_1,z_2\}$\\
IX3\;  $(0,-1)\,,\; \{z_1,1+z_2\}$ & IX4\; $(0,0)\,,\; \{z_1,z_2\}$\\
IX5\;  $(-1/(1+\chi),-1)\,,\; \{1+(1+\chi)z_1,1+z_2\}$ & IX6\; $(-1/(1+\chi),0)\,,\; \{1+(1+\chi)z_1,z_2\}$
\end{tabular}

\item Branch $\mc S_{10}$: $4$ residues for finite $z_1$, $z_2$:

\begin{tabular}{ll}
X1\; $(-1,0)\,,\; \{1+z_1,z_2\}$ & X2\; $(-1,1)\,,\;
\{1+z_1,1+z_1(1+\chi(1-z_2))\}$\\
X3\; $(0,0)\,,\; \{z_2,z_1\}$ & X4\; $(-1/(1+\chi),0)\,,\; \{z_2,1+z_1(1+\chi(1-z_2))\}$
\end{tabular}

\item Branch $\mc S_{11}$: $4$ residues for finite $z_1$, $z_2$:

\begin{tabular}{ll}
XI1\; $(-1,0)\,,\; \{1+z_1,z_2\}$ & XI2\; $(-1,-\chi/(1+\chi))\,,\;
\{1+z_1,1+(1+\chi)z_1(1+z_2)\}$\\
XI3\; $(0,0)\,,\; \{z_1,z_2\}$ & XI4\; $(-(1/(1+\chi),0)\,,\; \{z_2,1+(1+\chi)z_1(1+z_2)\}$
\end{tabular}

\item Branch $\mc S_{12}$: $4$ residues for finite $z_1$, $z_2$:

\begin{tabular}{ll}
XII1\; $(-1,0)\,,\; \{1+z_1-z_2,z_2\}$ & XII2\; $(0,0))\,,\; \{z_1,z_2\}$
\\
XII3\; $(0,1)\,,\; \{z_1,1+z_1-z_2\}$
\end{tabular}

\item Branch $\mc S_{13}$: $3$ residues for finite $z_1$, $z_2$:

\begin{tabular}{ll}
XIII1\; $(0,0)\,,\; \{z_1,z_2\}$ & XIII2\; $(0,1)\,,\; \{z_1,z_1+z_2-1\}$
\\
XIII3\; $(1,0)\,,\; \{z_2,z_1+z_2-1\}$
\end{tabular}

\item Branch $\mc S_{14}$: $4$ residues for finite $z_1$, $z_2$:

\begin{tabular}{ll}
XIV1\; $(0,0)\,,\; \{z_1,z_2\}$ & XIV2\; $(0,1)\,,\; \{z_1,z_2-1\}$
\\
XIV3\; $(1,0)\,,\; \{z_1-1,z_2\}$ & XIV4\;$(1,1)\,,\; \{z_1-1,z_2-1\}$
\end{tabular}
\end{itemize}

Apparently, there are $64$ residues at finite loci. However, many of
them are redundant:

First, for branch $\mc S_1$, there are three polynomials $1+z_1$,
$z_2$ and $1+z_1-\chi z_2$ vanishing at
the residue $(-1,0)$. So it is a degenerate residue and there are three
different values of it, depending on the combination of the three
polynomials. Namely, we have three residues I2, I3 and I4 at
$(-1,0)$. However, by (\ref{triple_lemma}) in Lemma 1, the sum of the
three residues is zero. So we can eliminate the residue I4. Similarly, we
remove VIII4.

Second, the $14$ branches intersect with each other and many residues are
corresponding to the same value of the loop momenta. For example, it is
clear that, at the
residues I1 and II5, the ISP values are the same,
\begin{equation}
  \label{eq:8}
  \tilde\ell_1 \cdot \omega= \frac{t}{2}\;, \quad 
  \tilde\ell_2 \cdot \omega=0\;, \quad 
  \tilde\ell_3 \cdot \omega=0\;, \quad 
  \tilde\ell_1 \cdot k_4=0\;, \quad 
  \tilde\ell_3 \cdot k_4=0\;, \quad
  \tilde\ell_1 \cdot k_1=\frac{t}{2}\;, \quad 
  \tilde\ell_3 \cdot k_1=\frac{t}{2}\;.
\end{equation}
So they correspond to the same point in loop momenta space. Then we
can remove residue II5. Sometimes, the parametrization apparently blows up at one
residue. Then we can consider the projective space of loop momenta and
study the coincident residues. For example, residues I5 and IV3 
correspond to the same point in the projective loop momenta space. Exhausting
all these obvious coincident residues, we further remove $35$ residues. 

Third, there are more subtle coincident residues. For example, on the
branch $\mc S_1$, the
limit $z_1\to-1, z_2\to 0$ does not correspond to one point in the
projective loop momenta space, since the limit depends on the path
approaching $(-1,0)$. Explicitly, we can check that for the small
contour defined in (\ref{local_residue}) with the denominators choice
of I2, the corresponding loop
momenta approach that of residue V1. Similarly, (I3, V3), (VIII2,
II2), (VIII3, XII3) are also coincident residues in this pattern. So we can further
remove $4$ residues.

Finally, we have $23$ independent residues and choose them to be,
\begin{align}
  \mc R = {} &
  \{\text{I1, I2, I3, I5, I6, I7, I8, II1, II3, II4, II6, III1, III3, V2,}
  \nonumber \\ & \;\;\;
  \text{VIII1, VIII2, VIII3, VIII5, VIII6, IX3, IX4, IX6, XI2}\}\;.
 \label{residue_list}
\end{align}

Then we calculate the values of all terms in the numerator on these
residues. The non-degenerate residues can be calculated directly by
the Jacobian matrix, while the degenerate residues are calculated by
the transformation law (\ref{transformation_law}). We use a program
powered by the algebraic geometry software Macaulay2 \cite{M2}. Actually we
calculated the values on all $64$ residues, and explicitly verified the
relation (\ref{triple_lemma}) and that for each coincident residue pair,
the values are the same.  

Now that we have a minimal basis of triple box residues we can transform
\eqref{AUGDECA} into an algeberaic equation for the master integral
coefficients. We simply expand the master integrals, namely 
$\IP[1]$, $\IP[(\tilde\ell_1+k_4)^2]$ and $\IP[(\tilde\ell_3-k_4)^2]$, onto
the leading singularity cycles and calculate the 23 residues for each of them,
\begin{gather}
\Res_{\{\mc R\}}\,\IP[1] = \frac{1}{\chi^3 s_{12}^{10}}
\{-1,1,-1,0,0,0,-1,-1,0,0,-1,1,0,1,1,-1,1,0,0,0,0,-1,1\}\;, \\
\Res_{\{\mc R\}}\,\IP[(\tilde\ell_1+k_4)^2] = 
\frac{1}{\chi^2 s_{12}^9} 
\{0,1,0,0,0,0,-1,-1,0,0,0,1,0,1,0,-1,0,0,0,0,0,0,1\}\;, \\
\Res_{\{\mc R\}}\,\IP[(\tilde\ell_3-k_4)^2] = 
\frac{1}{\chi^2 s_{12}^9} 
\{0,1,0,0,1,-1,0,0,0,0,0,0,-1,0,0,-1,0,0,-1,0,0,0,0\}\;,
\end{gather}
where the values are listed in the same order as (\ref{residue_list}).

\subsection{Integral Reduction Identities}
Consistency of the unitarity procedure of course requires that vanishing
Feynman integrals must also have vanishing deca-cuts. However, promotion 
of real slice integrals to arbitrary multidimensional contour integrals 
implies that certain integral relations eventually fail to hold. Therefore it 
is necessary to constrain the integration contours and thus demand that any 
valid integral identity is preserved under replacement of contours, which is
to say,
\begin{align}
\mc I_1 = \mc I_2 \;\LRa\; \cut(\mc I_1) = \cut(\mc I_2)\;.
\end{align}
The origin of the integral reduction identities used for projection onto
master integrals is simple. In order to understand this better let us briefly 
consider the general integrand numerator polynomial which can be parametrized 
in terms of seven irreducible scalar products and naturally splits into 
spurious and nonspurious parts,
\begin{align}
N = {} & 
\sum_{\{\alpha_1,\dots,\alpha_7\}}c_{\alpha_1\cdots\,\alpha_7}
(\tilde\ell_1\cdot k_4)^{\alpha_1}
(\tilde\ell_2\cdot k_1)^{\alpha_2}
(\tilde\ell_3\cdot k_4)^{\alpha_3}
(\tilde\ell_3\cdot k_1)^{\alpha_4} \nn \\[-1mm]
& \qquad\qquad\times
(\tilde\ell_1\cdot\omega)^{\alpha_5}
(\tilde\ell_2\cdot\omega)^{\alpha_6}
(\tilde\ell_3\cdot\omega)^{\alpha_7} \nn \\[2mm]
= {} &
\sum_{\{\alpha_1,\dots,\alpha_4\}}
(\tilde\ell_1\cdot k_4)^{\alpha_1}
(\tilde\ell_2\cdot k_1)^{\alpha_2}
(\tilde\ell_3\cdot k_4)^{\alpha_3}
(\tilde\ell_3\cdot k_1)^{\alpha_4} \nn \\
& \qquad\qquad
\times\Big\{
c_{\alpha_1\cdots\,\alpha_40}^{\text{NS}}+
c_{\alpha_1\cdots\,\alpha_41}^{\text{NS}}
(\tilde\ell_1\cdot\omega)(\tilde\ell_2\cdot\omega) \nn \\[1mm]
& \qquad\qquad\qquad\quad
+c_{\alpha_1\cdots\,\alpha_40}^{\text{S}}(\tilde\ell_1\cdot\omega)+
c_{\alpha_1\cdots\,\alpha_41}^{\text{S}}(\tilde\ell_2\cdot\omega)+
c_{\alpha_1\cdots\,\alpha_42}^{\text{S}}(\tilde\ell_3\cdot\omega)\Big\}
\end{align}
where the additional orthogonal direction is represented by
\begin{align}
\omega\equiv\frac{1}{2s_{12}}\left(
\aAvg{2}{3}{1}\aAvg{1}{\gamma^\mu}{2}-
\aAvg{1}{3}{2}\aAvg{2}{\gamma^\mu}{1}
\right)\;.
\end{align}
The maximum powers of the coefficients are restricted by renormalizability
requirements and completion of the integrand reduction is achieved by 
multivariate polynomial division with respect to a Gr{\"o}bner basis 
constructed form the ten inverse propagators using the Mathematica package 
{\tt BasisDet} \cite{Zhang:2012ce} written by one of the present authors. The 
final integrand then consists of 199 spurious and 199 nonspurious terms.

Although spurious terms are of course not present in integrated expressions,
they play a vital role at the level of the integrand and for the winding
numbers. We require that all parity odd vanishing integral identities continue
to hold after pushing integration contours into $(\C^4)^{\otimes 3}$. This
leads to nontrivial constraints on the winding numbers. Even powers of
spurious ISPs are reducible and can be expressed in terms of other scalar
products by means of Gram determinant identities for four-dimensional momenta.
Another equivalent strategy is to identify the full variety of Levi-Civita
insertions that appear in integral reduction, after using momentum
conservation. 

The fact that total derivatives vanish upon integration allows us to add such 
terms to the integrand and thereby produce a vast set of relations among 
integrals known as integration-by-parts identities of the form
\begin{align}
\IP\bigg[\pd{v^\mu}{\ell_i^\mu}\bigg] = 
\int\frac{d^D\ell_1}{(2\pi)^D}
\int\frac{d^D\ell_2}{(2\pi)^D}
\int\frac{d^D\ell_3}{(2\pi)^D}
\pd{}{\ell_i^\mu}\frac{v^\mu}{\prod_{k=1}^{10}
D_k^{a_k}(\ell_1,\ell_2,\ell_3)} = 0\;.
\end{align}
In order to ensure validity of the unitarity method each nontrivial identity
gives rise to a constraint requiring that the procedure yields a vanishing 
coefficient for the additional term as this is not true automatically contour 
by contour. In the four-point massless planar triple box this amounts 199
integral identities corresponding to the nonspurious part of the integrand. We
include some of those identities for illustration purposes in
Appendix~\ref{IBPSAPP}. For most families of Feynman integrals the relations
are often quite involved to obtain. We refer to the public computer codes 
{\tt FIRE} \cite{Smirnov:2013dia} and {\tt Reduze}
\cite{vonManteuffel:2012np}, for more information on how to generate the
identities in practice.

\subsection{Unique Master Integral Projectors} 
Based on the discussion of the previous sections, we compute the residues from
the deca-cut to resolve all constraints from integration-by-parts identities
and vanishing of all spurious terms upon integration, and organize them as a
homogeneous system of equations with a matrix $\tilde{M}$ of size 
$398\times 23$. It turns out that $\operatorname{rank}\tilde{M} = 20$. In
other words, the integration contours are subject to only 20 constraints in
order to yield a valid unitarity procedure. 

Overall we are therefore left with three independent contour weights that are
not fixed by integral reduction consistency requirements, exactly matching the
number of planar triple box master integrals. This pleasant freedom ensures
that we can derive projectors or master contours that normalize one
master integral to unity and set the remaining two to zero. We therefore
extend the matrix $\tilde{M}$ to a $401\times 23$ matrix $M$ with the residues
of the master integrals and write down equations for the projectors, 
\begin{align}
M\Omega_1 = (0,\dots,0,1,0,0)^T\;, \quad 
M\Omega_2 = (0,\dots,0,0,1,0)^T\;, \quad
M\Omega_3 = (0,\dots,0,0,0,1)^T\;.
\end{align} 
Because $\operatorname{rank} M = 23$ and $\Omega_i$ is a 23-dimensional
vector, the solutions for the three projectors are unique. The result is
obtained by standard linear algebera. The operation of replacing the original
integration contour by a master contour in the augmented deca-cut thus
extracts the coefficient of the corresponding master integral in the basis
decomposition of the three-loop amplitude.

The three solutions for the contour weights are denoted $\Omega_1$, $\Omega_2$
and $\Omega_3$ and correspond to extracting master integral $\IP[1]$,
$\IP[(\tilde\ell_1+k_4)^2]$ and $\IP[(\tilde\ell_3-k_4)^2]$ respectively.
Being integer numbers of up to a common constant which corresponds to an
irrelevant overall normalization, the weights may be interpreted geometrically
as winding numbers of the global poles,
\begin{align}
\Omega_1 = {} & \frac{1}{8}\chi^3 s_{12}^{10}
\{-1,0,-2,0,1,1,0,0,1,-1,-1,0,1,0,1,0,2,0,-1,1,-1,-1,0\} \;, \\[1mm] 
\Omega_2 = {} & \frac{1}{4}\chi^2 s_{12}^9
\{0,1,2,-1,-2,-1,0,-1,-1,1,0,0,-1,1,0,-1,-2,1,2,-1,1,0,0\} \;, \\[1mm]
\Omega_3 = {} & \frac{1}{4}\chi^2 s_{12}^9
\{1,-1,-2,3,3,0,-2,1,0,0,1,2,0,-1,-1,1,2,-3,-3,0,0,1,0\} \;.
\end{align}
The associated master contours $\mc M_i$ can be constructed explicitly as
linear combinations of weighted infinitesimal toroidal surfaces encircling the
23 global poles. According to \eqref{residue_list}, each master contour only receives
contributions from a small subset $\Lambda$ of the 14 on-shell branches after
removal of redundant residues. Therefore we can apply the decomposition 
$\mc M_i = \sum_{k\in\Lambda} \mc M_{i;k}$ for 
$\Lambda = \{\text{I,II,III,V,VIII,IX,XI}\}$.
In this notation, weights are kept implicit in the contours. The master
integral coefficients can then be written schematically in the very compact
form 
\begin{align}
\label{MASTERFORMULAE}
C_i = \frac{1}{(2\pi i)^2}\sum_{k\in\Lambda}\oint_{\mc M_{i;k}}\!
\frac{dz_1\wedge dz_2}{J_k(z_1,z_2)}
\sum_{\substack{\text{helicities}\\\text{particles}}}
\prod_{j=1}^8 A_{(j)}^\tree(z_1,z_2)\big|_{\mc S_k}\;.
\end{align}
This formula completes our derivation of the triple box master integral
coefficients.

\section{Examples}
In this section we apply the master integral coefficient formulae to the
triple box contribution to three-loop four-point gluon amplitudes with
specific helicity configurations and massless kinematics. We only consider
deca-cuts in the $s$-channel as contributions in the $t$-channel can be
obtained in a completely similar manner by accounting for cyclic permutation
of external legs. 

The starting point for the computation is the intermediate state sum over the
product of eight tree-level amplitudes that arise by cutting the numerator
function $N$ on-shell on each branch,
\begin{align}
N|_{\mc S_i} = \sum_{\substack{\text{helicities}\\\text{particles}}}
\prod_{j=1}^8 A^\tree_{(j)}(z_1,z_2)\big|_{\mc S_i}\;.
\end{align}
The expression is summed over all possible internal helicity states and
distinct configurations of gluons, fermions and scalars inside the loops. It
may be obtained quite easily in $\mc N = 0,1,2,4$ supersymmetric Yang-Mills
theories by superspace techniques \cite{Bern:2009xq,Sogaard:2011pr} or simply
by direct computation in a generic theory from the distributions of
holomorphic and antiholomorphic vertices (see Appendix \ref{TREESMP}) and all
possible flavour configurations. For now we exploit that the triple box
tree-level data has already been calculated for $n_f$ fermion and $n_s$
complex scalar flavours in the adjoint representation and refer to 
\cite{Badger:2012dv} for more details.

We examine the alternating helicity configuration $-+-+$. In this case, the
23 independent residues computed from the Laurent expansions of the products
of tree-level amplitudes explicitly read 
\begin{align}
\Res_{\{\mc R\}}
\bigg(
\frac{dz_1\!\wedge\!dz_2}{J(z_1,z_2)}\!
\sum_{\substack{\text{helicities}\\\text{particles}}}
\prod_{j=1}^8 A^{-+-+}_{(j)}\bigg) = {} &
\frac{1}{\chi^3 s_{12}^{10}}
\{1,-1,1,-r_1,0,r_1,1-r_1,1-r_1,r_1,-r_1,1,-1, \nn \\[-2mm] & \;
0,-1,-1+r_1,1-r_1,-1+r_1,0,r_2,0,0,1-r_1,-1\}\;,
\label{RESIDUES-+-+}
\end{align}
where the integrand is understood to be evaluated on the branch on which the
residue $\mc R_i$ actually resides. The new variables $r_1$ and $r_2$ introduced
in \eqref{RESIDUES-+-+} are defined in the following way,
\begin{align}
r_1\equiv {} & 
\frac{\chi(1+\chi^2)(4-n_f)+2\chi^2(3-n_s)}{(1+\chi)^4}\;, \\[2mm]
r_2\equiv {} & 
\frac{\chi}{(1+\chi)^4}\Big(
8(1-2\chi)(3-n_s)(4-n_f)-(13-(24-\chi)\chi)(4-n_f)
-2\chi(3-n_s)\\
& \qquad\qquad\quad 
+(1-(4-\chi)\chi)(n_f(3-n_s)^2-2(4-n_f)^2)-2(1-2\chi)(4-n_f)^2\Big)
\;. \nn
\end{align}
In a supersymmetric theory we may rewrite $n_f = \mc N$ and $n_s = \mc N-1$
where $\mc N$ counts the number of supercharges. With this in mind, the
residues were resolved into a form with coefficients that vanish for $\mc N =
4$ and also $\mc N = 2$. Summation of all residues with weights according to
the three master contours now yields the normalized master integral
coefficients,
\begin{align}
\hat C_1^{-+-+} = {} &
-1+(4-n_f)\frac{\chi}{(1+\chi)^2}-2(1+n_s-n_f)\frac{\chi^2}{(1+\chi)^4} \nn \\ &
-(2(1-2n_s)+n_f)(4-n_f)\frac{(1-2\chi)\chi}{4(1+\chi)^4} \nn \\[1mm] &
-(n_f(3-n_s)^2-2(4-n_f)^2)\frac{(1-(4-\chi)\chi)\chi}{8(1+\chi)^4}\;, \\[1mm]
\hat C_2^{-+-+} = {} &
-(4-n_f)\frac{1}{s_{12}(1+\chi)^2}+2(1+n_s-n_f)\frac{\chi}{s_{12}(1+\chi)^4} \nn \\ &
+(2(1-2n_s)+n_f)(4-n_f)\frac{1-2\chi}{s_{12}(1+\chi)^4} \nn \\[1mm] &
+(n_f(3-n_s)^2-2(4-n_f)^2)\frac{1-(4-\chi)\chi}{2s_{12}(1+\chi)^4}\;, \\[1mm]
\hat C_3^{-+-+} = {} &
-(2(1-2n_s)+n_f)(4-n_f)\frac{3(1-2\chi)}{2s_{12}(1+\chi)^4} \nn \\ &
-(n_f(3-n_s)^2-2(4-n_f)^2)\frac{3(1-(4-\chi)\chi)}{4s_{12}(1+\chi)^4}\;.
\end{align}
The case of helicities $--++$ simply gives
\begin{align}
\hat C_1^{--++} = -1\;, \quad 
\hat C_2^{--++} = 0\;, \quad
\hat C_3^{--++} = 0\;.
\end{align}
The coefficients displayed here therefore clearly reproduce the known result
in $\mc N = 4$ super Yang-Mills theory \cite{Bern:2005iz} and also agree with
the general expressions for any number of adjoint fermions and scalars
\cite{Badger:2012dv}. We finally verified the last independent
helicity configurations, namely $-++-$. 

\section{Conclusion}
In this paper we have extended four-dimensional univariate maximal unitarity
\cite{Kosower:2011ty} to cuts that define multidimensional algebraic
varieties in any gauge theory. As an example, we studied the application to
three-loop amplitudes. In maximal unitarity one cuts the maximum number of
propagators on-shell by replacement of real slice integrations by
multidimensional contours encircling the global poles of the loop integrand.
Each residue comes with a weight or winding number in order for the procedure
to conform with integral reduction identities from integration-by-parts
relations and parity-odd terms that vanish upon integration. 

The technique was demonstrated explicitly for the planar triple box with four
external massless legs. We obtained unique and very simple projectors for all
master integrals from just 23 finite multivariate residues of the maximally
cut loop integrand. We proved by the Global Residue Theorem that all residues
at infinity are linear combinations of finite residues and therefore not
needed in the computation. Also, we worked out master integral coefficients
for all independent helicity configurations in any gauge theory and found
exact agreement with recently obtained results \cite{Badger:2012dv}.

Our method is completely general and we expect it to apply to multivariate
residues in more than two complex variables, for instance in the $L$-loop
ladder topology and similar planar and nonplanar diagrams. It also establishes
an initial foundation for extracting master integral coefficients below the
leading singularity at the two-loop level and therefore provides a step
towards full automation. A complete calculation also requires terms of 
$\mc O(\epsilon)$ beyond four-dimensional cuts.

We also give directions for research in the future. It would be very
beneficial to better understand the simplicity of the projectors at two and
more loops. Guided by recent progress in integrand-level reduction
\cite{Badger:2013gxa,Mastrolia:2013kca}, we find it very interesting to
generalize the maximal unitarity method to $D = 4-2\epsilon$ dimensions where
amplitudes are cut into products of six-dimensional trees. In that case, the
polynomial ideal generated by a set of propagators is prime and hence there is
only one branch of the unitarity cut. This will maybe make it easier to
resolve the redundancy among residues. Scattering processes with more than
four external particles are also important to examine. Higher-point integrals
suffer from a more complicated set of reduction identities due to dependence
on a broader variety of kinematical invariants. Planar and nonplanar double
box integrals with five massless external legs probably hint the simplest
extension of maximal unitarity in this direction because their hepta-cut
equations are similar to those of the four-point case. As a general
mathematical tool we believe that multivariate residue calculations will also
prove useful broadly in the the study of supersymmetric Yang-Mills amplitudes
and evaluation of master integrals. We also expect that computational
algebraic geometry can offer important insight in the topological information
of algebraic varieties from unitarity cuts such as degeneracies under specific
kinematics. Hopefully some of these questions will be addressed soon.

\acknowledgments
We are grateful to Emil Bjerrum-Bohr, Poul Henrik Damgaard and Rijun Huang for
many useful discussions. It a pleasure to thank Simon Badger and Hjalte
Frellesvig for reading the paper in draft stage and for sharing material on
integration-by-parts identities. We also thank Simon Caron-Huot for comments
on the manuscript. MS acknowledges the theoretical elementary particle
physics group at UCLA and in particular Zvi Bern for hospitality during the
initial stages of this work. YZ expresses gratitude to Mingmin Shen for the
help on concepts of algebraic geometry and comments on the mathematical
section of this paper. The work of YZ is supported by Danish Council for
Independent Research (FNU) grant 11-107241.

\appendix
\clearpage
\section{Residues at Infinity}
\label{poles_at_infinity}

In this Appendix, we use the global residue theorem (GRT) (\ref{GRT}) to prove
that for the planar triple box, all residues at infinity are linearly related
to residues at finite loci. 

Since the branches $\mc S_8,\dots,\mc S_{14}$ are complex conjugate
of $\mc S_1,\dots,\mc S_7$, we only present the residues at infinity
for $\mc S_1,\dots,\mc S_7$. We start from the simplest cases $\mc
S_5$, $\mc S_6$ and $\mc S_7$, then $\mc S_1$ and $\mc S_2$, and finally the
complicated cases $\mc S_3$ and $\mc S_4$.

We extend the space $\mathbb C^2: (z_1, z_2)$ to $\mathbb {CP}^2:
[w_0, w_1, w_2]$ as $z_1=w_1/w_0$ and $z_2=w_2/w_0$. The points in
$\mathbb {CP}^2$ with $w_0=0$ are called points at infinity.
\bigskip
\begin{figure}[h]
\centering
\includegraphics[width=1\textwidth]{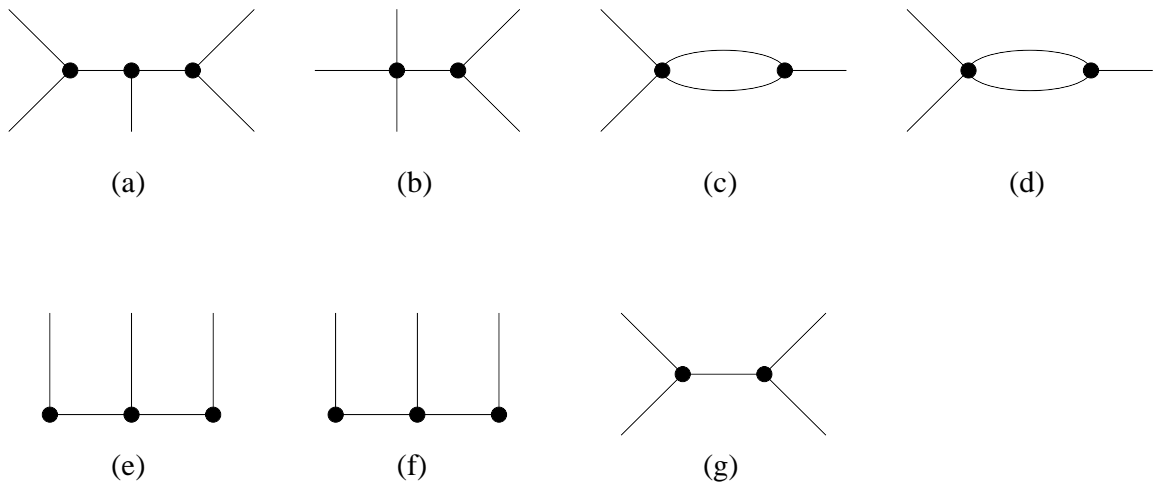}
\caption{{\it Infinity diagrams}: Inside one diagram, each line (curve)
  represents a vanishing polynomials in the denominators and each
  black dot represents a residue at infinity.}
\label{Infinite_diagrams}
\end{figure}

\begin{itemize}
\item Branch $\mc S_5$. The denominators in the Jacobian are $z_1,
  1+z_1-z_2, z_2$. Furthermore, the differential form may blow up at
  $w=0$. So we list the vanishing polynomials in projective variables,
  \begin{equation}
    \label{eq:16}
    w_1\,, \quad w_0+w_1-w_2\,, \quad w_2\,, \quad w_0\,.
  \end{equation}
It is clear that $w_1$ and $w_0$ both vanish at $[0,0,1]$,
$w_0+w_1-w_2$ and $w_2$ vanish at $[0,1,1]$, $w_2$ and $w_0$ 
vanish $[0,1,0]$. So there are
$3$ residues at infinity. Each of these residues are linear
combinations of the residues at finite loci, by GRT. For example, for
the residue with the denominators,
\begin{equation}
  \label{eq:17}
  [0,0,1]\,, \quad \{w_1, w_0\}\,,
\end{equation}
we can choose two divisors on $\mathbb{CP}^2$, $D_1=\{w_1=0\}$,
$D_2=\{w_0(w_0+w_1-w_2)w_2=0\}$. Then $D_1\cap D_2=\{[0,0,1],
[1,0,1], [1,0,0]\}$. Only the first one is a residue at infinity,
while the rest two are located at the residues labeled as V3 and
V2. Furthermore, near $[1,0,1]$, locally $D_1$ and $D_2$ reduce to
the hypersurfaces $\{z_1=0\}$ and $\{1+z_1-z_2=0\}$, which are just
the denominators defined in V3. Similar,  $D_1$ and $D_2$ reduced to
the denominators defined in V2 near $[1,0,0]$. Hence, by GRT, the
value of the residue at $[0,0,1]$ is the opposite to the sum of
values from V2 and V3. So we do not need to consider this
residue. Similarly, the other residues at infinity for this branch can
also be ignored.

We find it is helpful to illustrate the residues at infinity by a
sketch diagram in fig. \ref{Infinite_diagrams}. (e). We use lines (curves) to
represent vanishing polynomials and
intersection points (black dots) for residues at infinity. Residues at
finite loci are not shown. We call such a diagram, {\it infinity
diagram}. For branch $\mc S_5$, we use a horizontal
line for the polynomial $w_0$, and three vertical lines for $w_1$,
$w_0+w_1-w_2$ and $w_2$. There are three intersecting points, which
are three residues at infinity. To use GRT for $ [0,0,1]$,
we split the diagram to two components, one with $w_1$, the other with
$w_0$ $w_0+w_1-w_2$ and $w_2$. These two components correspond to two
divisors. Because the two components have no other intersecting point
at infinity, by GRT, the residue $ [0,0,1]$ should be a linear combination of
residues at finite loci. This diagram makes the proof clear.

\item Branch $\mc S_6$. The vanishing polynomials in projective
  variables are,
  \begin{equation}
    \label{eq:18}
    w_1\,, \quad w_2\,, \quad -w_0 + w_1 + w_2\,, \quad w_0\,,
  \end{equation}
and there are three residues at infinity. Its {\it infinity diagram}, fig.
\ref{Infinite_diagrams}. (f), has the same
structure as that of $\mc S_5$, and by the same proof, all residues
at infinity are linear combinations of residues at finite loci.

\item Branch $\mc S_7$.  The vanishing polynomials in projective
  variables are,
  \begin{equation}
    \label{eq:18}
    -w_0 + w_1\,, \quad w_1\,, \quad -w_0 + w_2\,, \quad w_2\,, \quad w_0\,.
  \end{equation}
The first two polynomials and $w_0$ vanish at $[0,0,1]$, while the
third and fourth and $w_0$ vanish at $[0,1,0]$. There are two residues
at infinity and both are degenerate, which can give four values
depending on the choice of denominators. The {\it infinity diagram} is
shown in fig.
\ref{Infinite_diagrams} (g). Again, the topology of this diagram, is a tree. 
So for any residue at infinity, no matter how to choose the denominators, we
can always split the diagram into two and the two denominators are in
different components. Then GRT shown that all residues at infinity can be
ignored.

\item Branch $\mc S_1$. The vanishing polynomials in projective
  variables are,
\begin{equation}
  \label{eq:19}
  w_1\,, \quad w_0 + w_1\,, \quad w_2\,, \quad w_0 + w_2\,, \quad 
  w_0 + w_1 - \chi w_2\,, \quad w_0\,.
\end{equation}
The intersecting structure is more complicated. However, the
{\it infinity diagram},  shown in fig.
\ref{Infinite_diagrams}. (a),
is still a tree, so by using GRT several times, we can shown that all residues
at infinity can be
ignored.

\item Branch $\mc S_2$. The vanishing polynomials in projective
  variables are,
\begin{equation}
  \label{eq:19}
  w_1\,, \quad w_0 + w_1\,, \quad w_0 + w_1 + \chi w_1\,, \quad w_2\,, \quad  
  w_0 +w_2\,, \quad w_0\,.
\end{equation}
Similarly, the {\it infinity diagram},  shown in fig.
\ref{Infinite_diagrams}. (b),
is a tree, so all residues at infinity can be
ignored.

\item Branch $\mc S_3$. The vanishing polynomials in projective
  variables are,
\begin{equation}
w_1\,, \quad w_0 + w_1\,, \quad w_2\,, \quad w_0^2 + w_0 w_1 + \chi w_0 w_1 -
\chi w_1w_2\,, \quad w_0\,.
\end{equation}
 The structure of residues at infinity of this diagram is
 complicated. Note that the fourth polynomial is quadratic, and it has
 two intersection points with the hypersurface $\{w_0=0\}$. The first,
 second, fourth and fifth polynomials vanish at $[0,0,1]$. On the
 other hand, the third,
 fourth and fifth polynomials vanish at $[0,1,0]$. So the
 corresponding {\it infinity diagram}, shown in fig.
\ref{Infinite_diagrams}. (c), has the topology of a
 loop. Hence the direct proof by GRT does not work for this branch,
 and residues at infinity are not linear combination of residues
 at finite loci
 on this branch.

However, we show that the residues at infinity are actually the sum of
residues on
$\mc S_3$ and also other branches. It is clearly that by GRT, if we know all
the values of the residue at $[0,1,0]$, we get all the values of
residues at $[0,0,1]$. At $[0,1,0]$, there are three 
residues, corresponding to, $\{w_0 w_2, w_0^2 + w_0 w_1 + \chi w_0 w_1 - \chi
w_1 w_2\}$
$\{ (w_0^2 + w_0 w_1 + \chi w_0 w_1 - \chi w_1 w_2) w_0,w_2\}$ and $\{w_2 (
w_0^2 + w_0 w_1 + \chi w_0 w_1 - \chi w_1 w_2),w_0\}$. By Lemma
1 (\ref{triple_lemma}) (generalized version), the three residues sum
to zero, so we only need the first two. Furthermore, for the second
one, the two polynomials for the loop in the {\it infinity diagram} are
combined together. So it is possible to split the diagram into two
components intersecting at only one point. Then this residue is a
combination of residues at finite loci. Finally we only need to
consider the first one. Explicitly, we find that this residue
and the residue II4 (on branch $\mc S_2$) correspond to the same point in the
projective
loop momenta space. The values of this residue for all integrand terms
match these of residue II4. So this residue at infinity is not
needed. Therefore, no residue at infinity is needed for this branch. 

\item Branch $\mc S_4$.  The vanishing polynomials in projective
  variables are,
  \begin{equation}
    \label{eq:21}
    w_1\,, \quad w_0 + w_1\,, \quad w_2\,, 
    \quad w_0^2 + w_0 w_1 + \chi w_0 w_1 + w_1 w_2 + \chi w_1 w_2\,, \quad
w_0\,.
  \end{equation}
Again, the {\it infinity diagram}, shown in fig.
\ref{Infinite_diagrams}. (d), is a loop and so the direct proof by GRT
does not work. However, by the same analysis as that for $\mc S_3$, the
residues at infinity are actually the sum of residues on
$\mc S_4$ and also other branches. No residue at infinity is needed for this
branch. 
\end{itemize}

\clearpage
\section{Kinematical Configurations of the Planar Triple Box}
\label{TREESMP}
We include here valid kinematical configurations of three-point trees in the 
maximally cut planar triple box such that no external legs are neither 
holomorphically nor antiholomorphically collinear for generic momenta. The
diagrams are in one-to-one correspondence with the on-shell solutions and 
follow their enumeration described in the main text. For brevity we only
depict contributions that are not related to each other by parity conjugation.
In our conventions, black and white blobs denote MHV and googly-MHV vertices
respectively.
\vspace*{.5cm}
\begin{figure}[!h]
\bc
Solution $\mc S_1$ \\
\includegraphics[scale=0.425]{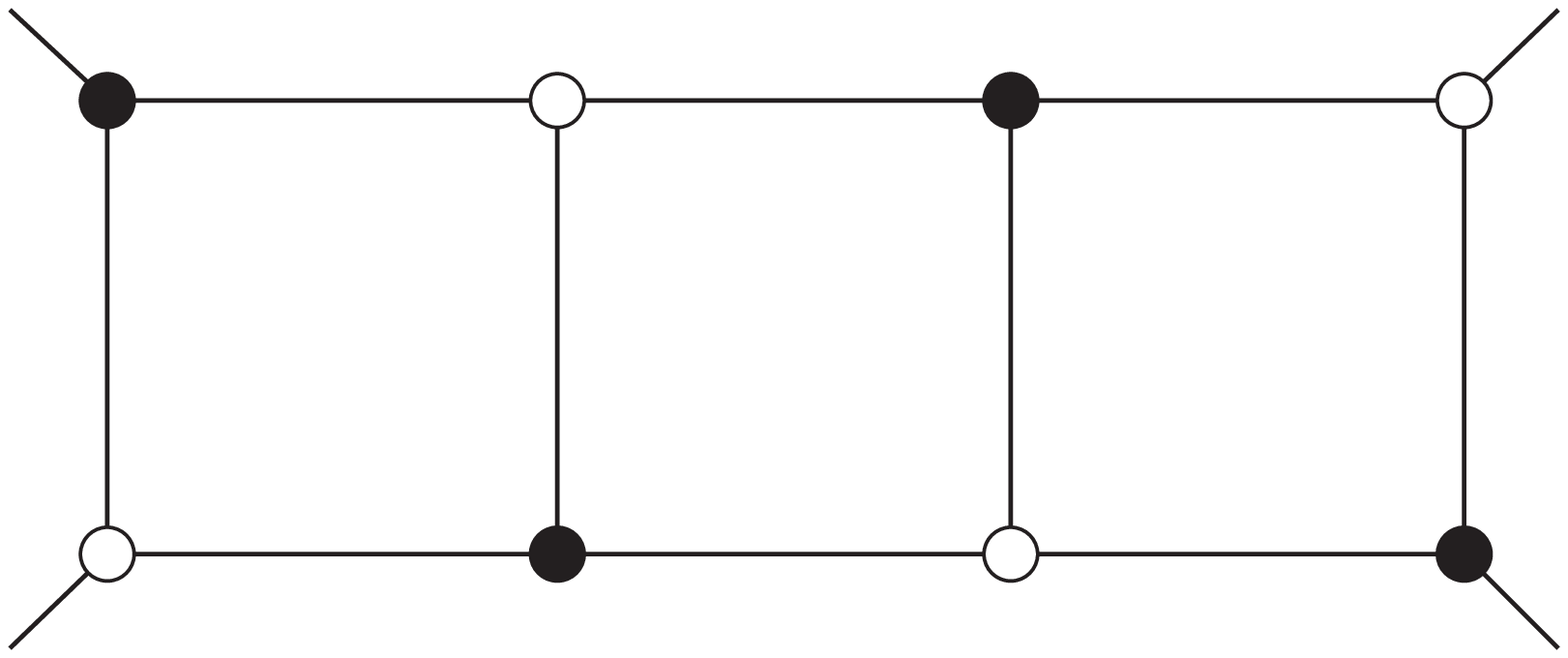}
\vspace*{.3cm}
\ec

\begin{minipage}{0.5\textwidth}
\bc
Solution $\mc S_2$ \\
\includegraphics[scale=0.425]{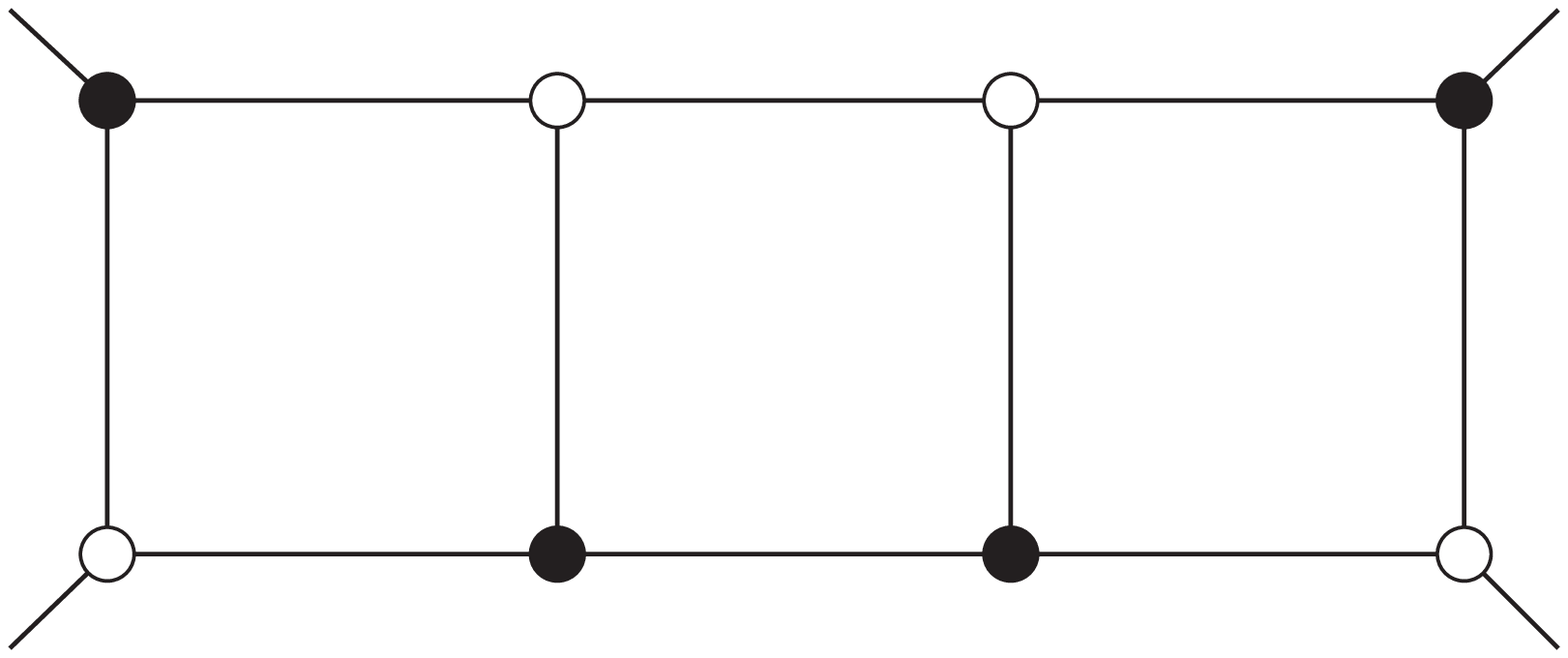}
\vspace*{.5cm}

Solution $\mc S_4$ \\
\includegraphics[scale=0.425]{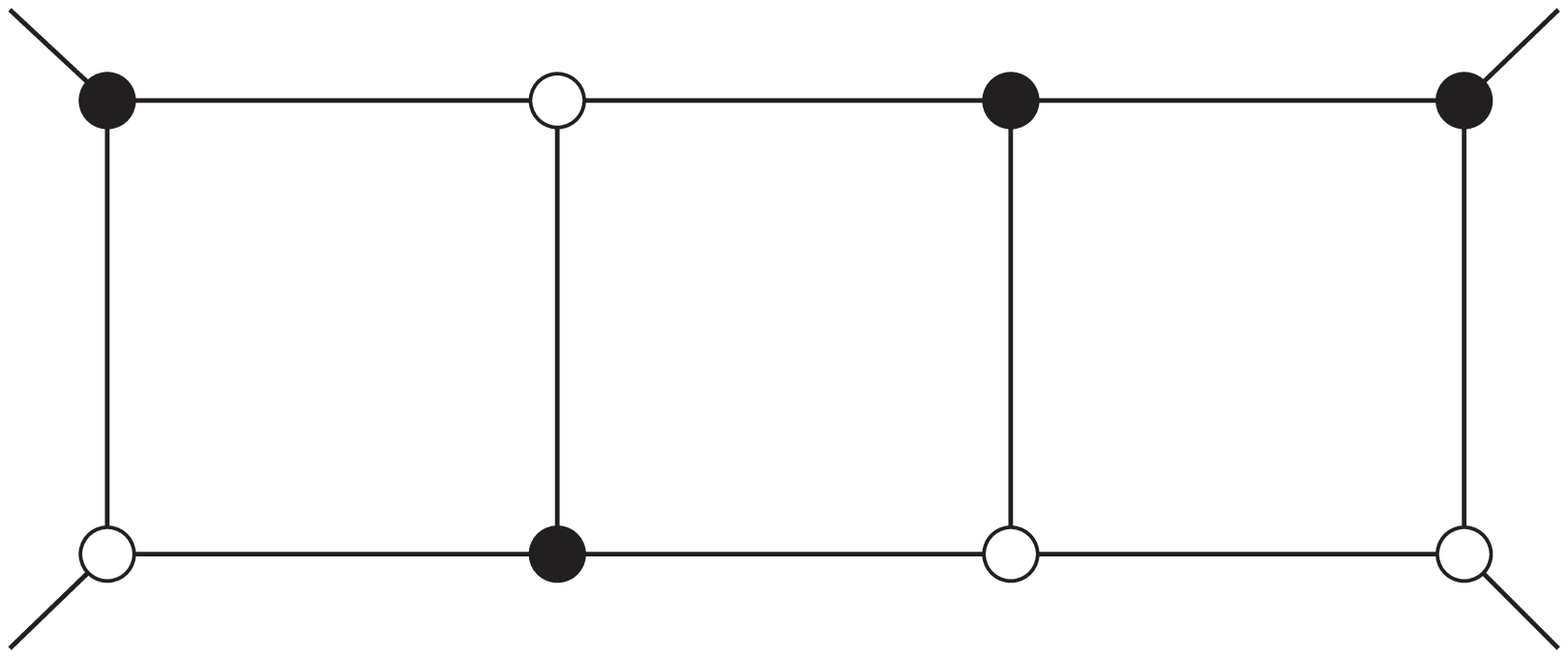}
\vspace*{.5cm}

Solution $\mc S_6$ \\
\includegraphics[scale=0.425]{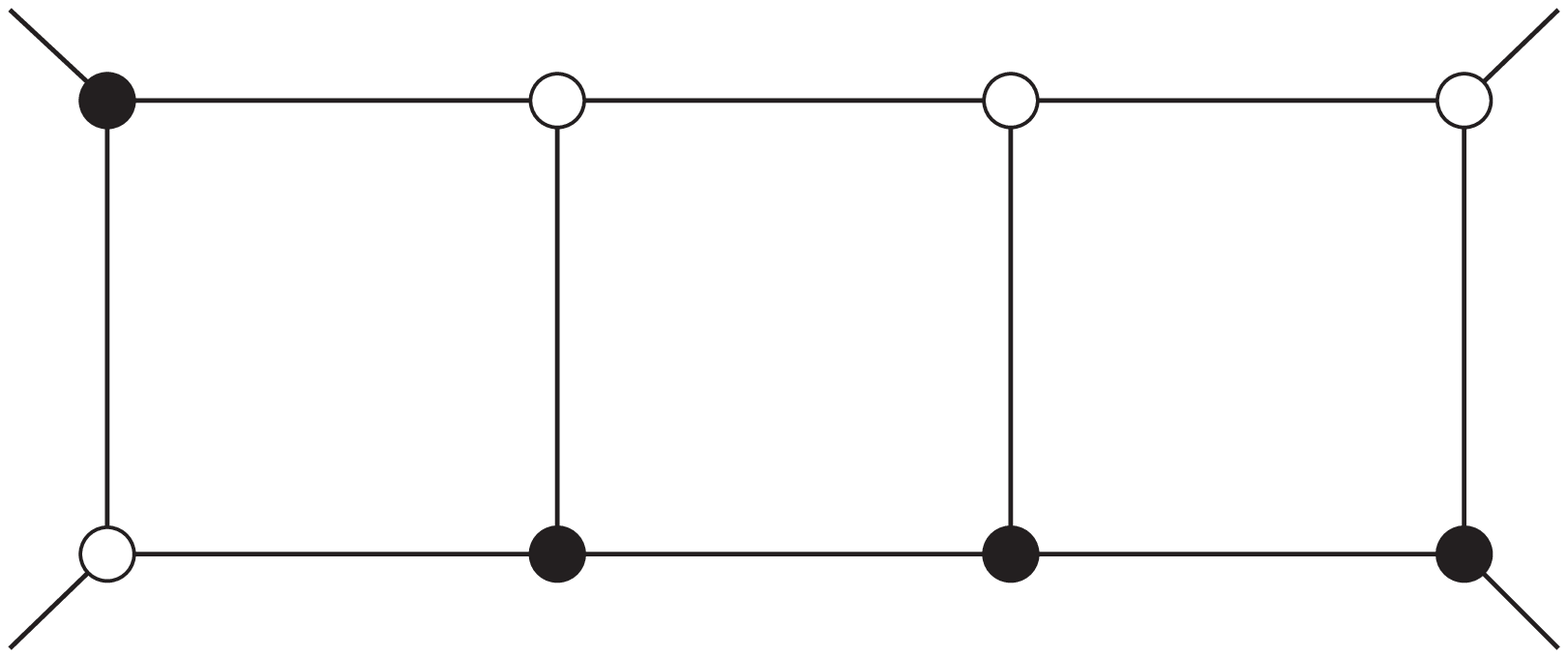}
\ec
\end{minipage}
\begin{minipage}{0.5\textwidth}
\bc
Solution $\mc S_3$ \\
\includegraphics[scale=0.425]{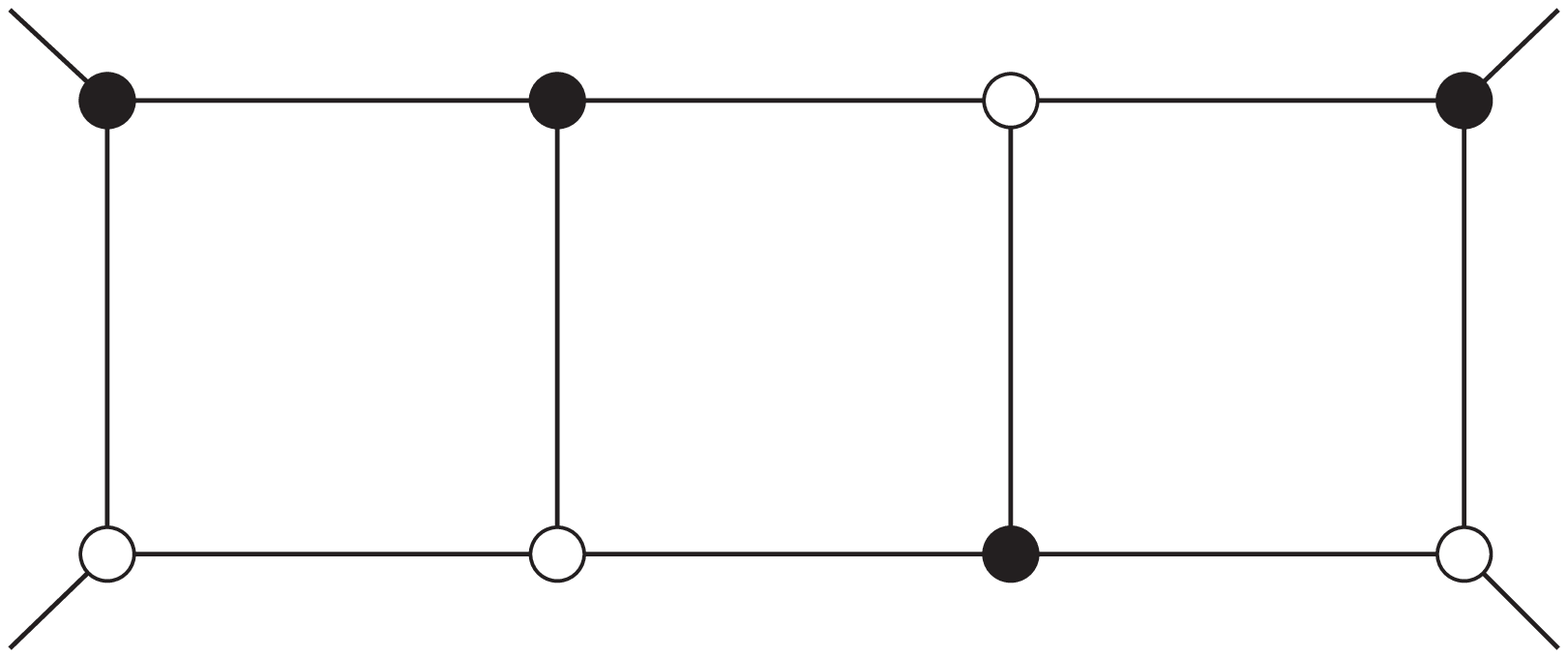}
\vspace*{.5cm}

Solution $\mc S_5$ \\
\includegraphics[scale=0.425]{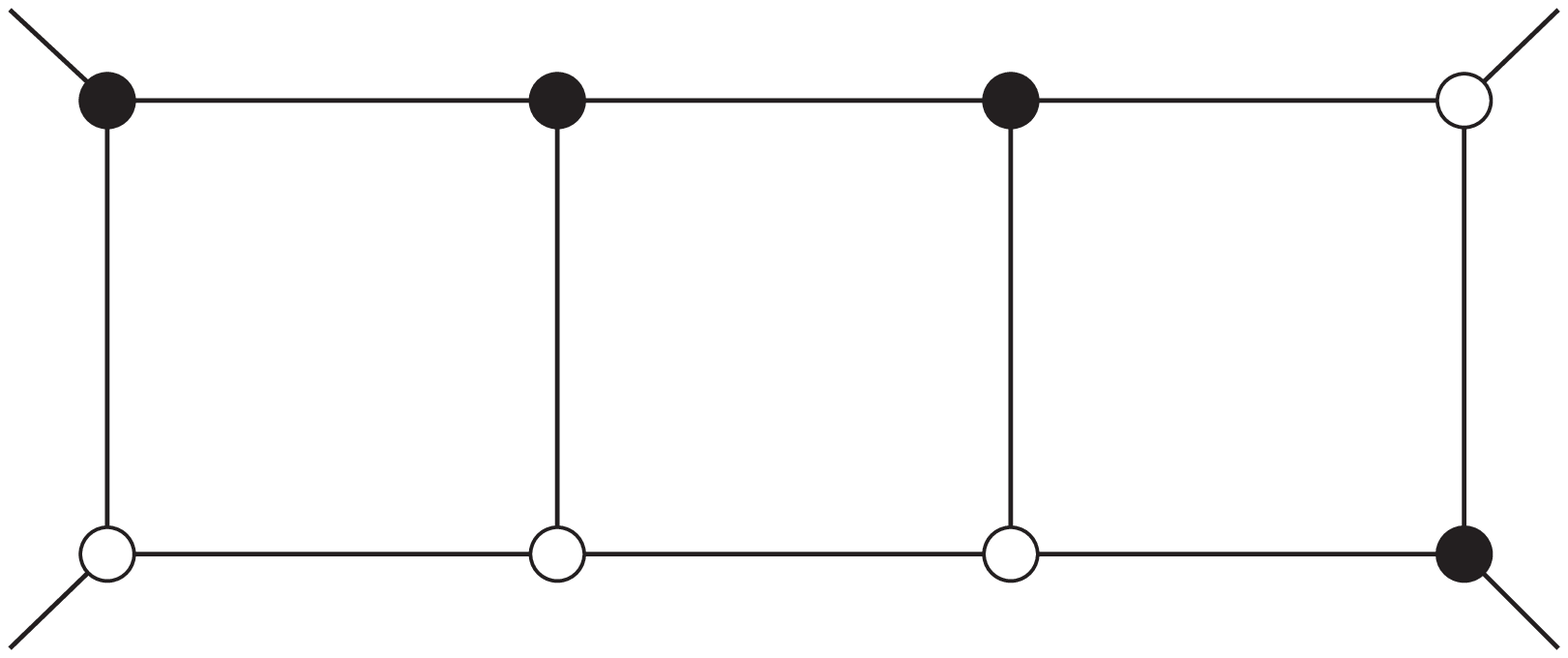}
\vspace*{.5cm}

Solution $\mc S_7$ \\
\includegraphics[scale=0.425]{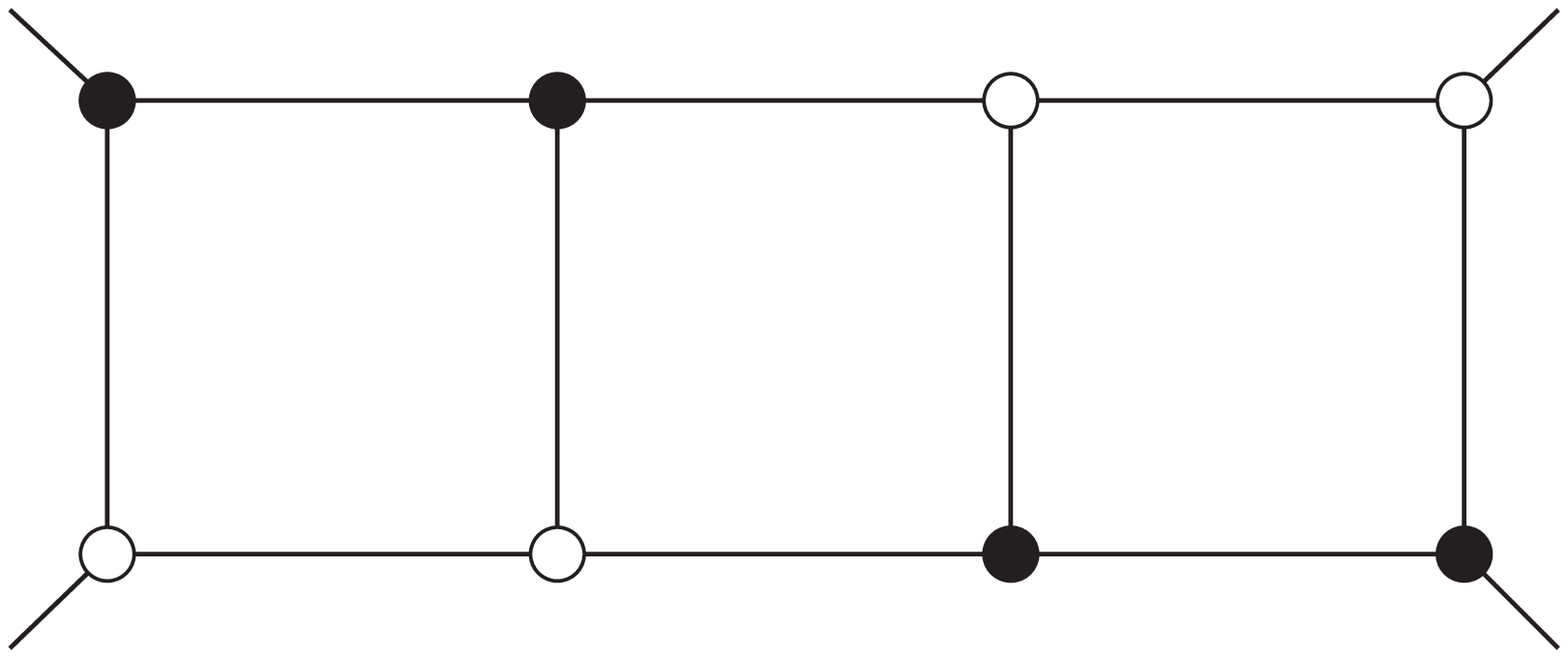}
\ec
\end{minipage}
\end{figure}

\clearpage
\section{Planar Triple Box Integration-By-Parts Identities}
\label{IBPSAPP}
We provide below a small subset of the four-dimensional integration-by-parts
identities used for the reduction onto master integrals of renormalizable
Feynman integrals with planar triple box topology. The relations are for
simplicity stated in terms of irreducible scalar products (ISPs) defined by 
$x_{ij}\equiv \tilde\ell_i\cdot v_j$ with momentum basis $v =
(k_1,k_2,k_4,\omega)$ so that the master integrals can be written
\begin{align} 
\IP[1]\;, \quad \IP[x_{13}]\equiv\frac{1}{2}\IP[(\tilde\ell_1+k_4)^2]+\cdots\;,
\quad \IP[x_{33}]\equiv-\frac{1}{2}\IP[(\tilde\ell_3-k_4)^2]+\cdots\;,
\end{align}
whereas the general integral is 
$\IP[x_{13}^{a_1}x_{21}^{a_2}x_{33}^{a_3}x_{31}^{a_4}x_{14}^{a_5}x_{24}^{a_6}]$.
Then we have for instance
\begin{align}
\IP[x_{31}] = {} & 
-\IP[x_{33}]+\cdots \\[2mm]
\IP[x_{13}^2] = {} & 
+\frac{1}{2}\chi s_{12}\IP[x_{13}]+\cdots \\[2mm]
\IP[x_{21}^4] = {} & +\frac{1}{8}\chi^3 s_{12}^3\IP[x_{13}]+\cdots \\[2mm]
\IP[x_{13}^2x_{21}x_{31}] = {} & 
+\frac{1}{32}\chi s_{12}^4\IP[1]
-\frac{1}{8}s_{12}^3\IP[x_{13}] \\[2mm]
\IP[x_{13}^2x_{21}] = {} & 
-\frac{1}{16}\chi s_{12}^3\IP[1]
+\frac{1}{4}s_{12}^2\IP[x_{13}]
-\frac{1}{4}\chi s_{12}^2\IP[x_{33}]+\cdots \\[2mm]
\IP[x_{13}x_{21}^4x_{33}] = {} & 
-\frac{1}{128}\chi(2+\chi)s_{12}^6\IP[1]
+\frac{1}{32}(2+\chi)s_{12}^5\IP[x_{13}]+\cdots\;, \\[2mm]
\IP[x_{13}^2x_{21}^3x_{33}] = {} & 
-\frac{1}{128}\chi(3+\chi(3+\chi))s_{12}^6\IP[1]
+\frac{1}{32}(3+\chi(3+\chi))s_{12}^5\IP[x_{13}]+\cdots 
\end{align}
Ellipses denote truncation at the maximal number of propagators. We also have
to consider integral reduction identites for terms in the integrand that
contain products of spurious scalar products $x_{14}$ and $x_{24}$. A few of
these relations read 
\begin{align}
\IP[x_{14}x_{24}] = {} &
+\frac{1}{8}\chi s_{12}^2\IP[1]
-\frac{1}{2}s_{12}\IP[x_{13}]
+\frac{1}{2}\chi s_{12}\IP[x_{33}]\cdots \\[2mm]
\IP[x_{21}x_{14}x_{24}] = {} &
-\frac{1}{16}\chi(1+\chi)s_{12}^3\IP[1]
+\frac{1}{4}(1+\chi)s_{12}^2\IP[x_{13}]+\cdots \\[2mm]
\IP[x_{33}x_{31}x_{14}x_{24}] = {} &
+\frac{1}{64}\chi^2s_{12}^4\IP[1]
-\frac{1}{8}\chi s_{12}^3\IP[x_{13}]
-\frac{3}{16}\chi s_{12}^3\IP[x_{33}]+\cdots \\[2mm]
\IP[x_{31}^2x_{14}x_{24}] = {} &
+\frac{1}{64}\chi s_{12}^4\IP[1]
-\frac{1}{16}(1-\chi)s_{12}^3\IP[x_{13}]
+\frac{3}{16}\chi s_{12}^3\IP[x_{33}]+\cdots
\end{align}
The remaining relations, which we do not include here, display the same
simplicity. Many high-rank integrals are actually reducible below ten
propagators, which means clearly means that they should vanish on the maximal
cut.

\clearpage

\end{document}